\documentclass{llncs}
\usepackage{etex}

\usepackage{amssymb}
\usepackage{amsmath}
\usepackage{enumerate}
\usepackage{graphicx}
\usepackage{semantic}
\usepackage{mathtools}
\usepackage{proof}
\usepackage[colorlinks, citecolor=red]{hyperref}
\usepackage[all]{xy}

\newcommand{\act}[2][]{{\stackrel{#2}{\longrightarrow}}^{#1}}

\newcommand{\beq}{\simeq}
\newcommand{\weq}{\approx}

\usepackage{soul}
\setstcolor{red}
\pagestyle{plain}
\newlength{\probwidth}
\setlength{\probwidth}{3cm}

\newcommand{\nprob}[4][8]{
\begin{center}\normalfont\fbox{
\addtolength{\probwidth}{#1cm}\parbox{\probwidth}{\textsc{#2}\\\hspace*{1.5em}
\begin{tabular}[t]{rp{#1cm}}
\textit{Input:}&#3.\\\textit{Problem:}&#4
\end{tabular}}}
\end{center}}

\renewcommand{\act}[2][]{\stackrel{#2}{\longrightarrow}^{#1}}

\usepackage{tikz}

\title{Branching Bisimilarity Checking for PRS}
\author{Qiang Yin, Yuxi Fu, Chaodong He, Mingzhang Huang and Xiuting Tao}
\institute{BASICS, Department of Computer Science, Shanghai Jiao Tong University}
\date{January 7, 2014}

\begin{document}

\maketitle

\begin{abstract}
Recent studies reveal that branching bisimilarity is decidable for both nBPP (normed Basic Parallel Processes) and nBPA (normed Basic Process Algebras).
These results lead to the question if there are any other models in the hierarchy of PRS (Process Rewrite Systems) whose branching bisimilarity is decidable.
It is shown in this paper that the branching bisimilarity for both nOCN (normed One Counter Nets) and nPA (normed Process Algebras) is undecidable.
These results essentially imply that the question has a negative answer.
\end{abstract}

\section{Introduction}

Verification on infinite-state systems has been intensively studied for the past two decades~\cite{BurkartCaucalMollerSteffen2001,KuceraJancar2006}.
One major concern in these studies is equivalence checking.
Given a specification $\mathcal{S}$ of an intended behaviour and a claimed implementation $\mathcal{I}$ of $\mathcal{S}$, one is supposed to demonstrate that $\mathcal{I}$ is correct with respect to $\mathcal{S}$.
A standard interpretation of correctness is that an implementation should be behaviourally equivalent to its specification.
Among all the behavioural equalities studied so far, bisimilarity stands out as the most abstract and the most tractable one.
Two well known bisimilarities are the strong bisimilarity and the weak bisimilarity due to Park~\cite{Park1981} and Milner~\cite{Milner1989}.
Considerable amount of effort has been made to investigate the decidability and the algorithmic aspect of the two bisimilarities on various models of infinite state system~\cite{Srba2002b}.
These models include pushdown automata, process algebras, Petri nets and their restricted and extended variations.
An instructive classification of the models in terms of PRS (Process Rewrite Systems) is given by Mayr~\cite{Mayr2000}.

The strong bisimilarity checking problem has been well studied for PRS hierarchy.
Influential decidability results include for example~\cite{BaetenBergstraKlop1987,ChristensenHuttelStirling1992,ChristensenHirshfeldMoller1993,Stirling1996,HirshfeldJerrum1999}.
On the negative side, Jan\v{c}ar attained in~\cite{Jancar1995} the undecidable result of strong bisimilarity on nPN (normed Petri Nets).
The proof makes use of a powerful technique now known as Defender's Forcing~\cite{JancarSrba2008}, which remains a predominant tool to establish negative results about equivalence checking.

In the weak case the picture is less clear.
It is widely believed that weak bisimilarity is decidable for both nBPA (normed Basic Process Algebras) and nBPP (normed Basic Parallel Processes).
The problem has been open for a long time.
Srba~\cite{Srba2002a} showed that weak bisimilarity on nPDA (normed Pushdown Automata) is undecidable by a reduction from the halting problem of Minsky Machine.
The undecidability was soon extended to nOCN (normed One Counter Nets), a submodel of both nPDA and nPN, by Mayr~\cite{Mayr2003}.
Srba also showed that the weak bisimilarity on PA (Process Algebras) is undecidable~\cite{Srba2003}.
Later several highly undecidable results were established by Jan\v{c}ar and Srba~\cite{Srba2004,JancarSrba2002,JancarSrba2008} for the weak bisimilarity checking problem on PN, PDA and PA.

The decidability of the weak bisimilarity on nBPA and nBPP has been open for well over twenty years.
Encouraging progress has been made recently.
\mbox{Czerwi\'{n}ski}, Hofman and Lasota proved that branching bisimilarity, a standard refinement of the weak bisimilarity, is decidable on nBPP~\cite{Czerwinski2011}.
The novelty of their approach is the discovery of some kind of normal form for nBPP.
Using a quite different technique Fu showed that the branching bisimilarity is also decidable on nBPA~\cite{Fu2013}.
In retrospect one cannot help thinking that more attention should have been paid to the branching bisimilarity.
Going back to the original motivation to equivalence checking, one would agree that a specification $\mathcal{S}$ normally contains no silent actions because silent actions are about how-to-do.
Consequently all the silent actions introduced in an implementation must be bisimulated vacuously by the specification.
It follows that $\mathcal{S}$ is weakly bisimilar to an implementation $\mathcal{I}$ if and only if $\mathcal{S}$ is branching bisimilar to $\mathcal{I}$.
What this observation tells us is that as far as verification is concerned the branching bisimilarity ought to play a role no less than the weak bisimilarity.

\begin{figure*}[t]
    \begin{center}
      \begin{tabular}{|c||c|c|c|c|c|}
        \hline
         & $\ $\textsf{nBPA}$\ $ & $\ $\textsf{nBPP}$\ $ & $\ $\textsf{nPDA}$\ $ & $\ $\textsf{nPA}$\ $ & $\ $\textsf{nPN}$\ $ \\ \hline\hline
        $\ $Strong Bisimilarity$\ $ & $\checkmark$\cite{BaetenBergstraKlop1987} & $\checkmark$\cite{ChristensenHirshfeldMoller1993} & $\checkmark$\cite{Stirling1996} & $\checkmark$\cite{HirshfeldJerrum1999} & $\mathbf{\times}$\cite{Jancar1995} \\ \hline
        $\ $Branching Bisimilarity$\ $ & $\checkmark$\cite{Fu2013} & $\checkmark$\cite{Czerwinski2011} & $\ \mathbf{\times}$[this paper]$\ $ & $\ \mathbf{\times}$[this paper]$\ $ & $\times$\cite{Jancar1995} \\ \hline
        $\ $Weak Bisimilarity$\ $ & {\bf ?} & {\bf ?} & $\mathbf{\times}$\cite{Mayr2003} & $\mathbf{\times}$[this paper] & $\mathbf{\times}$\cite{Jancar1995} \\ \hline
      \end{tabular}
      \caption{Decidability of Branching Bisimilarity for Normed PRS}
      \label{Decidability-of-PRS}
    \end{center}
\end{figure*}

The above discussion suggests to address the following question:
Is there any other model in the PRS hierarchy whose branching bisimilarity is decidable?
The purpose of this paper is to resolve this issue.
Our contributions are as follows:
\begin{itemize}
\item
We establish the fact that on both nOCN and nPA every relation between the branching bisimilarity and the weak bisimilarity is undecidable.
These are improvement of Mayr's result about the undecidability of the weak bisimilarity on nOCN~\cite{Mayr2003} and Srba's result~\cite{Srba2003} about the undecidability of the weak bisimilarity on PA.
These new results together with the previous (un)decidability results about the {\em normed} models in PRS are summarized in Fig.~\ref{Decidability-of-PRS}, where a tick is for `decidable' and a cross for `undecidable'.
\item
We showcase the subtlety of Defender's Forcing technique usable in branching bisimulation game.
It is pointed out that the technique must be of a semantic nature for it to be applicable to the branching bisimilarity.
\end{itemize}
The two negative results imply that in the PRS hierarchy the branching bisimilarity on every normed model above either nBPA or nBPP is undecidable.

The rest of the paper is organized as follows.
Section~\ref{sec:Preliminaries} introduces the necessary preliminaries.
Section~\ref{sec-Defender-Forcing-for-Branching-Bisimulation} establishes the undecidability result for nOCN and demonstrates Defender's Forcing technique for branching bisimulation game.
Section~\ref{sec-Undecidability-of-nPA} proves the undecidability result about nPA.
Section~\ref{sec-Conclusion} concludes.

\section{Preliminaries}\label{sec:Preliminaries}

A {\em process algebra} $\mathcal{P}$ is a triple $(\mathcal{C}, \mathcal{A}, \Delta)$, where $\mathcal{C}$ is a finite set of process constants, $\mathcal{A}$ is a finite set of actions ranged over by $\ell$, and $\Delta$ is a finite set of transition rules.
The {\em processes} defined by $\mathcal{P}$ are generated by the following grammar:
\[
P \  \Coloneqq  \  \epsilon  \  \mid \  X \  \mid \  PP'  \  \mid \   P{\,\|\,}P'.
\]
The grammar equality is denoted by $=$.
We assume that the sequential composition $PP'$ is associative up to $=$ and the parallel composition $P{\,\|\,} P'$ is associative and commutative up to $=$.
We also assume that $\epsilon P=P\epsilon=\epsilon {\,\|\,} P=P {\,\|\,} \epsilon=P$.
There is a special symbol $\tau$ in $\mathcal{A}$ for silent transition.
The set $\mathcal{A}\setminus\{\tau\}$ is ranged over by $a,b,c,d$.
The transition rules in $\Delta$ are of the form $X \act{\ell} P$.
The following labeled transition rules define the operational semantics of the processes.
\[\begin{tabular}{cccc}
  $\inference{X\act{\ell}P\in\Delta}{X\act{\ell}P}$ &
  $\inference{P\act{\ell}P'}{PQ\act{\ell}P'Q}$ &
  $\inference{P\act{\ell}P'}{P{\,\|\,} Q\act{\ell}P'{\,\|\,} Q}$ & $\inference{Q\act{\ell}Q'}{P{\,\|\,} Q\act{\ell}P{\,\|\,} Q'}$\\
\end{tabular}\]
The operational semantics is structural, meaning that $PQ\act{\ell}P'Q$, $P{\,\|\,} Q\act{\ell}P'{\,\|\,} Q$ and $Q{\,\|\,} P\act{\ell}Q{\,\|\,} P'$ whenever $P\act{\ell}P'$.
We write $\Longrightarrow $ for the reflexive transitive closure of $\stackrel{\tau}{\longrightarrow}$, and $\stackrel{\widehat{\ell}}{\Longrightarrow}$ for $\Longrightarrow\stackrel{\ell}{\longrightarrow}\Longrightarrow$ if $\ell\ne\tau$ and for $\Longrightarrow$ otherwise.

A {\em one counter net} $\mathcal{M}$ is a 4-tuple $(\mathcal{Q}, X, \mathcal{A}, \Delta)$, where $\mathcal{Q}$ is a finite set of states ranged over by $p,q,r,s$, $X$ represents a place, $\mathcal{A}$ is a finite set of actions as in a process algebra, and $\Delta$ is a finite set of transition rules.
A {\em process} defined by $\mathcal{M}$ is of the form $pX^{n}$, where $n$ indicates the number of tokens in $X$.
A transition rule in $\Delta$ is of the form $pX^{i}\stackrel{\ell}{\longrightarrow}qX^{j}$ with $i<2$.
The semantics is structural in the sense that $pX^{i+k}\stackrel{\ell}{\longrightarrow}qX^{j+k}$ whenever $pX^{i}\stackrel{\ell}{\longrightarrow}qX^{j}$.
A process $P$ defined in $\mathcal{P}$, respectively $\mathcal{M}$, is {\em normed} if $\exists \ell_1, \dots, \ell_n.P \act{\ell_1 } \dots  \act{\ell_{n}}\epsilon$, respectively $\exists \ell_1, \dots, \ell_n,p.(P \act{\ell_1 } \dots  \act{\ell_{n}}p)\wedge\forall\ell,Q.\neg(p\stackrel{\ell}{\longrightarrow}Q)$.
We say that $\mathcal{P}$/$\mathcal{M}$ is normed if only normed processes are definable in it.
We write (n)PA for the (normed) Process algebras and (n)OCN for the (normed) One Counter Nets.

In the presence of silent actions two well known process equalities are the weak bisimilarity~\cite{Milner1989} and the branching bisimilarity~\cite{GlabbeekW96}.

\begin{definition}
A relation $\mathcal{R}$ is a {\em weak bisimulation} if the following are valid: \\
1. Whenever $P\mathcal{R}Q$ and $P \act{\ell} P'$, then $Q\stackrel{\widehat{\ell}}{\Longrightarrow}Q'$ and $P'\mathcal{R}Q'$ for some $Q'$. \\
2. Whenever $P\mathcal{R}Q$ and $Q\act{\ell}Q'$, then $P\stackrel{\widehat{\ell}}{\Longrightarrow}P'$ and $P' \mathcal{R} Q'$ for some $P'$. \\
The {\em weak bisimilarity} $\weq$ is the largest weak bisimulation.
\end{definition}

\begin{definition}\label{def:beq}
A relation $\mathcal{R}$ is a {\em branching bisimulation} if the following hold: \\
1. Whenever $P\mathcal{R}Q$ and $P \act{\ell} P'$, then either (i)
    $Q\Longrightarrow Q''\act{\ell}Q'$ and $P'\mathcal{R}Q'$ and $P\mathcal{R}Q''$ for some $Q',Q''$ or (ii) $\ell = \tau$ and $P'\mathcal{R}Q$. \\
2. Whenever $P\mathcal{R}Q$ and $Q\act{\ell}Q'$, then either (i) $P\Longrightarrow P''\act{\ell}P'$ and $P'\mathcal{R} Q'$ and $P''\mathcal{R} Q$ for some $P',P''$ or (ii) $\ell=\tau$ and
    $P\mathcal{R}Q'$. \\
The {\em branching bisimilarity} $\beq$ is the largest branching bisimulation.
\end{definition}

The following lemma, first noticed by van Glabbeek and Weijland~\cite{GlabbeekW96}, plays a fundamental role in the study of branching bisimilarity.

\begin{lemma}\label{computation-lemma}
If $P \Longrightarrow P'\Longrightarrow  P'' \simeq P$ then $P'\simeq P$.
\end{lemma}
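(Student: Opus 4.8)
The plan is to prove the statement — which is the computation lemma of van Glabbeek and Weijland~\cite{GlabbeekW96} — by exhibiting an explicit branching bisimulation. First I would record the routine facts that $\simeq$ is an equivalence relation and that it contains the identity, so that transitivity of $\simeq$ may be used freely and so that $\simeq$-related processes are a fortiori related by any relation containing $\simeq$.

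The heart of the argument is the relation
$$
\mathcal{R} \;\ddef\; {\simeq}\ \cup\ \bigl\{(Q,R),(R,Q) \;:\; R\Longrightarrow Q\Longrightarrow R' \text{ and } R\simeq R' \text{ for some } R'\bigr\},
$$
which is symmetric by construction and contains the pair $(P',P)$ under the hypotheses (take $R'=P''$). Hence it suffices to show that $\mathcal{R}$ is a branching bisimulation, since that yields $P'\simeq P$ at once. By symmetry it is enough to check clause~1 of Definition~\ref{def:beq} for every pair. Pairs taken from $\simeq$ are immediate: $\simeq$ is already a branching bisimulation and its successor clauses land back inside ${\simeq}\subseteq\mathcal{R}$.

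For a ``new'' pair of the form $(Q,R)$ — so $R$ is a $\tau$-ancestor of $Q$ sitting on the left — a move $Q\act{\ell}Q'$ is matched trivially: from $R\Longrightarrow Q\act{\ell}Q'$ we obtain the required $R\Longrightarrow Q''\act{\ell}Q'$ with $Q''=Q$, and $Q'\mathrel{\mathcal{R}}Q'$, $Q\mathrel{\mathcal{R}}Q$ hold by reflexivity. The substantive case is a pair $(R,Q)$ together with a move $R\act{\ell}Q'$ of the ancestor $R$; here I would respond from $Q$ by first going $Q\Longrightarrow R'$ and then transferring the move through $R'\simeq R$. When $\ell\ne\tau$ the branching clause for $(R,R')\in{\simeq}$ gives $R'\Longrightarrow R''\act{\ell}T'$ with $Q'\simeq T'$ and $R\simeq R''$, and then $Q\Longrightarrow R'\Longrightarrow R''\act{\ell}T'$ discharges clause~1(i), both obligations holding via ${\simeq}\subseteq\mathcal{R}$. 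When $\ell=\tau$ one splits according to whether $R\act{\tau}Q'$ is the first step of the fixed path $R\Longrightarrow Q\Longrightarrow R'$ and, if not, whether the $\simeq$-clause answers the $\tau$-move by a genuine transition or by the inert option $Q'\simeq R'$; in the inert sub-case one answers from $Q$ not inertly but by splitting $Q\Longrightarrow R'$ as $Q\Longrightarrow\widehat{R}\act{\tau}R'$ and checking that $(R,\widehat R),(Q',R')\in\mathcal{R}$.

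The main obstacle is exactly this last piece of book-keeping: the inert ``$\ell=\tau$'' option of Definition~\ref{def:beq} does not compose cleanly, so one is repeatedly forced to answer an inert $\tau$ by a non-inert response of the partner and then re-verify membership in $\mathcal{R}$, while also watching the degenerate cases in which one of the $\tau$-paths is empty — in which several of the ``new'' pairs collapse into $\simeq$ and the $\simeq$-case applies verbatim. Apart from this the verification is a mechanical unwinding of the definition and uses nothing about the syntax of processes.
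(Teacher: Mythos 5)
The paper itself does not prove this lemma: it only attributes the statement to van Glabbeek and Weijland~\cite{GlabbeekW96}, so there is no in-paper argument to compare against. Your reconstruction is the standard stuttering-lemma proof and it is correct. The relation $\mathcal{R}$, consisting of $\simeq$ together with all pairs (in both orders) of a process $R$ and an intermediate stage $Q$ of a silent computation $R\Longrightarrow Q\Longrightarrow R'$ with $R\simeq R'$, is symmetric and contains $(P',P)$, and your case analysis does verify clause~1 of Definition~\ref{def:beq} for every pair: when the ancestor sits on the right the move is matched via the trivial decomposition $R\Longrightarrow Q\act{\ell}Q'$; when the ancestor sits on the left the move is routed through $Q\Longrightarrow R'$ and the branching clause for $R\simeq R'$; and the one delicate point --- an inert answer $Q'\simeq R'$ to a $\tau$-move of the ancestor --- is correctly repaired by peeling the last step off $Q\Longrightarrow R'$ as $Q\Longrightarrow\widehat{R}\act{\tau}R'$, with the degenerate case $Q=R'$ collapsing the pair into $\simeq$ exactly as you say. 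Two small remarks. First, the sub-case ``is $R\act{\tau}Q'$ the first step of the fixed path'' is introduced but never handled; it is harmless only because the treatment you give for the other sub-cases applies to it verbatim, so you should either delete the split or note that it is subsumed. Second, you announce that transitivity of $\simeq$ ``may be used freely'', yet your argument never actually needs more than symmetry and reflexivity --- and it is worth \emph{not} needing it, since transitivity of branching bisimilarity is itself a non-trivial fact (Basten's ``branching bisimilarity is an equivalence indeed'') whose standard proofs are intertwined with this very stuttering property; invoking it casually would at best obscure and at worst circularize the dependency structure.
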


Let $\approxeq$ be a process equivalence.
A silent action $P\stackrel{\tau}{\longrightarrow}P'$ is {\em state preserving} with regards to $\approxeq$, notation $P\rightarrow P'$, if $P'\approxeq P$;
it is {\em change-of-state} with regards to $\approxeq$, notation $P\stackrel{\iota}{\longrightarrow}P'$, if $P'\not\approxeq P$.
The reflexive and transitive closure of $\rightarrow$ is denoted by $\rightarrow^{*}$.
Branching bisimilarity strictly refines weak bisimilarity in the sense that only state preserving silent actions can be ignored; a change-of-state must be explicitly bisimulated.
Suppose that $P \beq Q$ and $P \act{\ell} P'$ is matched by the transition sequence $Q \stackrel{\tau}{\longrightarrow} \dots \stackrel{\tau}{\longrightarrow} Q_{i} \stackrel{\tau}{\longrightarrow} \dots \stackrel{\tau}{\longrightarrow}Q'' \act{\ell} Q'$.
By definition one has $P \beq Q''$.
It follows from Lemma~\ref{computation-lemma} that $P \beq Q_{i}$, meaning that all silent actions in $Q\Longrightarrow Q''$ are necessarily state preserving.
This property fails for the weak bisimilarity as the following example demonstrates.
\begin{example}\label{example}
Consider the transition system $\{P\act{b}\epsilon,\ P\stackrel{\tau}{\longrightarrow} P' \act{a} \epsilon,\ P\act{a} \epsilon;\ Q\act{b} \epsilon,\ Q\stackrel{\tau}{\longrightarrow} Q' \act{a} \epsilon\}$.
One has $P\weq Q$.
However $P\not\beq Q$ since $Q\not\simeq Q'$.
\end{example}

A game theoretic characterization of bisimilarity is by~\emph{bisimulation game}~\cite{Stirling1998}.
Suppose that a pair of processes $P,Q$, called a {\em configuration}, are defined in say a process algebra $(\mathcal{C},\mathcal{A},\Delta)$.
A \emph{branching bisimulation game} for the configuration $(P,Q)$ is played between {\em Attacker} and {\em Defender}.
The game is played in {\em rounds}.
A new configuration is chosen after each round.
Every round consists of three steps defined as follows, assuming $(P_{0},P_{1})$ is the current configuration:
\begin{enumerate}
\item
Attacker chooses $i\in\{0,1\}$, $\ell \in \mathcal{A}$ and some process $P_i'$ such that $P_i \stackrel{\ell}{\longrightarrow} P_i'$.
\item
Defender may respond in either of the following manner:
\begin{itemize}
\item Choose some $P_{1-i}', P_{1-i}''$ such that $P_{1-i} \Longrightarrow  P_{1-i}'' \stackrel{\ell}{\longrightarrow} P_{1-i}'$.
\item Do nothing in the case that $\ell=\tau$.
 \end{itemize}
\item
Attacker decides which of $(P_i, P_{1-i}'')$, $(P_i', P_{1-i}')$ is the new configuration if Defender has played.
Otherwise the new configuration must be $(P_i', P_{1-i})$.
\end{enumerate}
In a {\em weak bisimulation game} a round consists of two steps.
The first step is the same as above.
In the second step Defender chooses some $P_{1-i}'$ and some transition sequence $P_{1-i} \stackrel{\widehat{\ell}}{\Longrightarrow} P_{1-i}'$.
The game then continues with $(P_i',P_{1-i}')$.

Defender wins a game if it never gets stuck; otherwise Attacker wins.
We say that Defender/Attacker has a {\em winning strategy} if it can always win no matter how the opponent plays.
The following lemma is well known, a clever use of which often simplifies bisimulation argument considerably.

\begin{lemma}\label{lm:game1}
Defender has a winning strategy in the branching, respectively weak, bisimulation game starting from the configuration $(P,Q)$ if and only if $P\simeq Q$, respectively $P\approx Q$.
\end{lemma}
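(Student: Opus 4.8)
The plan is to prove Lemma~\ref{lm:game1} by establishing the two directions separately, and by exhibiting an explicit correspondence between branching bisimulations (as in Definition~\ref{def:beq}) and Defender's winning strategies in the game. I will concentrate on the branching case; the weak case is entirely analogous, reading the weak game and Definition of $\weq$ in place of their branching counterparts.

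\paragraph{From bisimilarity to a Defender strategy.}
First I would assume $P\beq Q$ and describe Defender's strategy: maintain as an invariant that the current configuration $(P_0,P_1)$ satisfies $P_0\beq P_1$. Suppose this holds and Attacker plays $P_i\act{\ell}P_i'$. By Definition~\ref{def:beq}, clause~1 applied to the pair $P_i\beq P_{1-i}$ (using symmetry of $\beq$ when $i=1$), either there are $P_{1-i}'',P_{1-i}'$ with $P_{1-i}\Longrightarrow P_{1-i}''\act{\ell}P_{1-i}'$, $P_i\beq P_{1-i}''$ and $P_i'\beq P_{1-i}'$, or $\ell=\tau$ and $P_i'\beq P_{1-i}$. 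In the first case Defender plays this $P_{1-i}''\act{\ell}P_{1-i}'$; whichever of $(P_i,P_{1-i}'')$ or $(P_i',P_{1-i}')$ Attacker selects, the invariant is preserved. In the second case Defender does nothing, and the forced new configuration $(P_i',P_{1-i})$ again satisfies the invariant. Since $\beq$-related processes always admit such a response, Defender never gets stuck, hence wins.

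\paragraph{From a Defender strategy to bisimilarity.}
Conversely, assume Defender has a winning strategy from $(P,Q)$. I would define $\mathcal{R}$ to be the set of all configurations reachable as the ``current configuration'' in some play in which Defender follows this winning strategy, together with $(P,Q)$ itself; equivalently, take the union over all such plays of the configurations that arise. The claim is that $\mathcal{R}$ is a branching bisimulation. Given $(P_0,P_1)\in\mathcal{R}$ and a transition $P_0\act{\ell}P_0'$ (the case $P_1\act{\ell}P_1'$ is symmetric), this is a legal Attacker move, so Defender's winning strategy prescribes a reply: either some $P_1\Longrightarrow P_1''\act{\ell}P_1'$ — and then \emph{both} $(P_0,P_1'')$ and $(P_0',P_1')$ are reachable-under-the-strategy, hence lie in $\mathcal{R}$ — or, when $\ell=\tau$, Defender does nothing, forcing $(P_0',P_1)\in\mathcal{R}$. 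These are exactly alternatives (i) and (ii) of Definition~\ref{def:beq}. One subtlety to record: Attacker gets to pick \emph{which} successor configuration continues the game, so to know that \emph{both} options stay in $\mathcal{R}$ we use that the strategy is winning against \emph{every} Attacker, in particular against the one choosing $(P_0,P_1'')$ and against the one choosing $(P_0',P_1')$. Thus $\mathcal{R}$ is a branching bisimulation containing $(P,Q)$, so $P\beq Q$.

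\paragraph{Main obstacle.}
The routine parts are the case analyses; the one point demanding care is the direction just discussed — making sure $\mathcal{R}$ is closed under \emph{both} of Attacker's possible continuations of a round, not merely one of them, and that the ``do nothing'' option of Defender is correctly matched to alternative (ii). For the weak case the corresponding argument is slightly simpler, since there the round has only two steps and no branching choice for Attacker at the end, so closure of the analogous relation under the single successor $(P_i',P_{1-i}')$ suffices; I would simply remark that the proof transfers \emph{mutatis mutandis}, citing Definition of $\weq$ and the weak game in place of Definition~\ref{def:beq} and the branching game.
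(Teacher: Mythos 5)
Your proof is correct and is the standard argument for the game-theoretic characterization of (branching) bisimilarity; the paper itself gives no proof of Lemma~\ref{lm:game1}, citing it as well known, so there is nothing to diverge from. Both directions are handled properly, and you rightly flag the one real subtlety in the converse direction: since Attacker chooses which of $(P_i,P_{1-i}'')$ and $(P_i',P_{1-i}')$ becomes the next configuration, both must lie in the relation of strategy-reachable configurations, which holds precisely because the strategy wins against every Attacker; this makes that relation a branching bisimulation even when the strategy is history-dependent.
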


Attacker has a winning strategy for the branching bisimulation game of the pair $P,Q$ defined in Example~\ref{example}.
It simply chooses $P\act{a} \epsilon$.
If Defender chooses $Q\stackrel{\tau}{\longrightarrow} Q' \act{a} \epsilon$, Attacker chooses the configuration $(P,Q')$ and wins.
Defender can win the weak bisimulation game of $(P,Q)$ though.

\section{Defender's Forcing with Delayed Justification}\label{sec-Defender-Forcing-for-Branching-Bisimulation}

A powerful technique for proving lower bounds for bisimilarity checking problem is Defender's Forcing described by Jan\v{c}ar and Srba in~\cite{JancarSrba2008}.
The basic idea is to force Attacker to make a particular choice in a bisimulation game by introducing enough copycat rules.
An application of the technique to weak bisimulation game should be careful since both Attacker and Defender can take advantage of silent transitions.
The design of a branching bisimulation game is even more subtle.
In such a game a sequence of silent transitions used by Defender, except possibly the last one, must all be state preserving.
A useful technique, motivated by Lemma~\ref{computation-lemma}, is to make use of generating processes.
The process $G$ defined by the rules $G\stackrel{\tau}{\longrightarrow}GX$ and $GX\stackrel{\tau}{\longrightarrow}G$ is {\em generating} due to the fact that every process that $G$ may evolve into, say $GX^{n}$, is branching bisimilar to $G$.
The presence of other transition rules for $G$ and $X$ would not change the fact that $G\simeq GX^{n}$ for all $n$.
This technique has already been used in the design of weak bisimulation games~\cite{JancarSrba2008,Mayr2003}.
The relations these games give rise to are not branching bisimulation because a state-preserving transition may be simulated by a change-of-state silent transition.
In what follows we use a small example to expose the subtlety of branching bisimulation game and the technique to apply Defender's Forcing in such a game.

Mayr proved in~\cite{Mayr2003} a general result that the weak bisimilarity is undecidable for any model that subsumes nOCN.
The lower bound is achieved by reducing from the halting problem of Minsky machine.
A Minsky machine $\mathcal{M}$ with two counters $c_1,c_2$ is a program of the form $1: I_1;\ 2: I_2;\ \dots;\ m{-}1: I_{m-1};\ m: \textrm{halt}$,
where for each $i\in\{1,\dots,m-1\}$ the instruction $I_i$ is in either of the following forms, assuming $j,k\in\{1,\dots,m-1\}$ and $e\in \{1,2\}$,
\begin{itemize}
\item $c_{e}:= c_{e}+1$ and then goto $j$.
\item if $c_{e} = 0$ then goto $j$; otherwise $c_{e}:= c_{e}-1$ and then goto $k$.
\end{itemize}
By encoding a pair of numbers $(n_1, n_2)$ by G\"odel number of the form $2^{n_1}3^{n_2}$, Mayr implemented the increment and decrement operations on the counters by multiplying and dividing by $2$ and $3$ respectively.
The central part of Mayr's proof is to show that it is possible to encode these operations and test for divisibility by constant into weak bisimulation games on nOCN.
We shall show that Mayr's reduction can be strengthened to produce reductions to branching bisimulation games on nOCN.
For every instruction ``$i:I_i$'' of a Minsky machine $\mathcal{M}$ a pair of states $p_i,p_i'$ are introduced.
Suppose ``$i: c_2:= c_2+1; \textrm{ goto } j$'' is the i-th instruction of $\mathcal{M}$.
The instruction is translated to the rules given in Fig.~\ref{fig:ocn_add}.
The model defined in Fig.~\ref{fig:ocn_add} is open-ended.
Transition rules associated to $p_{j}$ and $p_{j}'$ are not given.
We have however the following interesting property.

\begin{figure*}[t]
\begin{center}
\begin{tabular}{|r|l|}
\hline
$p_i \stackrel{\tau}{\longrightarrow}G'\ $ & $\ p_i' \stackrel{\tau}{\longrightarrow}G'$ \\
$p_i \act{a} q_1\ $ & $\ G' \act{a} q_1',\ G'\stackrel{\tau}{\longrightarrow}G'X,\ G'X\stackrel{\tau}{\longrightarrow} G'\ $ \\
\hline
$q_1\act{a}q_2\ $ & $\ q_1'\act{a} q_2'$ \\
$q_1\act{t}t_{3}\ $ & $\ q_1'\act{t} t_{1}$ \\
\hline
$q_2 \stackrel{\tau}{\longrightarrow}G\ $ & $\ q_2' \stackrel{\tau}{\longrightarrow}G$ \\
$\ G\stackrel{\tau}{\longrightarrow}GX,\ GX\stackrel{\tau}{\longrightarrow} G,\ G\act{a} q_3\ $ & $\ q_2' \act{a} q_3'$ \\
\hline
$q_3\act{a}p_j\ $ & $\ q_3'\act{a} p_j'$ \\
$q_3\act{t}t_{1}\ $ & $\ q_3'\act{t} t_{1}$ \\
\hline
$\ t_{3}X\act{c}t''X,\ t''X\act{c}t'X,\ t'X\act{c}t_{3}\ $ & $\ t_{1}X\act{c}t_{1}\ $ \\
\hline
\end{tabular}
\end{center}
\caption{Multiplication Operation on Counter in OCN}
\label{fig:ocn_add}
\end{figure*}

\begin{lemma}\label{2014-01-19}
Let $n=2^{n_1}3^{n_2}$ for some $n_{1},n_{2}$.
Defender of the branching bisimulation game of $(p_jX^{3n}, p_j'X^{3n})$ has a winning strategy if and only if Defender of the branching bisimulation game of $(p_iX^{n}, p_i'X^n)$ has a winning strategy.
\end{lemma}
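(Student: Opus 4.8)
The plan is to prove the statement in the equivalent form, furnished by Lemma~\ref{lm:game1}, that $p_iX^n\simeq p_i'X^n$ if and only if $p_jX^{3n}\simeq p_j'X^{3n}$. Two auxiliary facts do most of the work. (a) As noted before the lemma, $G$ and $G'$ are generating processes, so $GX^a\simeq GX^b$ and $G'X^a\simeq G'X^b$ for all $a,b$; moreover, since $q_2$ does nothing but slide silently into $G$, and $p_i'$ slides silently into $G'$ while also being able to reproduce the single observable move of $G'$, one obtains the unconditional equivalences $q_2X^a\simeq GX^b$ and $p_i'X^a\simeq G'X^b$ for all $a,b$. (b) The fragment built from $t_3,t'',t',t_1$ is deterministic and carries no silent transition: $t_3X^a$ performs exactly $3a$ consecutive $c$-actions and then deadlocks, while $t_1X^b$ performs exactly $b$ of them, so $t_3X^a\simeq t_1X^b$ exactly when $b=3a$; in particular $t_3X^n\simeq t_1X^{3n}$.

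For the implication $p_jX^{3n}\simeq p_j'X^{3n}\Rightarrow p_iX^n\simeq p_i'X^n$ I would exhibit one branching bisimulation $\mathcal R$, obtained by adjoining, to a branching bisimulation containing $(p_jX^{3n},p_j'X^{3n})$, to the clock bisimulation of (b), and to the generating equivalences of (a), the further pairs $(q_3X^{3n},q_3'X^{3n})$, $(q_2X^n,q_2'X^{3n})$, $(q_1X^n,q_1'X^{3n})$, $(p_iX^n,p_i'X^n)$, together with $(q_2'X^{3n},GX^b)$ and $(p_iX^n,G'X^b)$ for all $b$, and then closing under identity and symmetry. The branching clauses are then checked pair by pair; the only cases that are not routine are: $(q_3X^{3n},q_3'X^{3n})$, whose $a$-moves land in $(p_jX^{3n},p_j'X^{3n})$ and whose $t$-moves land in $(t_1X^{3n},t_1X^{3n})$; $(q_2X^n,q_2'X^{3n})$ and $(q_2'X^{3n},GX^b)$, where, using the unconditional chain $q_2X^n\simeq GX^n\simeq GX^{3n}$, the crucial move $q_2'X^{3n}\act{a}q_3'X^{3n}$ is answered by pumping a copy of $G$ up to exactly $3n$ tokens, $GX^b\Longrightarrow GX^{3n}\act{a}q_3X^{3n}$, and every move $GX^b\act{a}q_3X^b$ is answered by pumping that $G$ instead to $b$ tokens; $(q_1X^n,q_1'X^{3n})$, whose $a$-moves land in $(q_2X^n,q_2'X^{3n})$ and whose $t$-moves land in $(t_3X^n,t_1X^{3n})$; and $(p_iX^n,p_i'X^n)$, where the $\tau$-moves merely slide both sides into the generating process while $p_iX^n\act{a}q_1X^n$ is answered by $p_i'X^n\act{\tau}G'X^n\Longrightarrow G'X^{3n}\act{a}q_1'X^{3n}$. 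This last match is exactly the phenomenon the section is about: Defender uses the generating process $G'$ to commit to the counter value $3n$ while keeping every intervening $\tau$ state preserving, a commitment that is justified (or refuted) only later, once the clock fragment of (b) is probed.

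For the converse, suppose $p_iX^n\simeq p_i'X^n$; since $p_i'X^n\simeq G'X^n$ by (a), transitivity gives $p_iX^n\simeq G'X^n$. The move $p_iX^n\act{a}q_1X^n$ must be matched, and as every $\tau$-successor of $G'X^n$ is again some $G'X^m$, this forces $q_1X^n\simeq q_1'X^m$ for some $m$; because $q_1'$ has no silent transition, matching the $t$-moves forces $t_3X^n\simeq t_1X^m$, whence $m=3n$ by (b). Matching $q_1X^n\act{a}q_2X^n$ — again using that $q_1'$ is silent-free — gives $q_2X^n\simeq q_2'X^{3n}$, and combining with $q_2X^n\simeq GX^n\simeq GX^{3n}$ from (a) yields $q_2'X^{3n}\simeq GX^{3n}$. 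One further round of the same reasoning — $q_2'X^{3n}\act{a}q_3'X^{3n}$ matched through a $\tau$-successor $GX^m$ of $GX^{3n}$, the $t$-moves of $q_3$ and $q_3'$ pinning $m=3n$ by (b), then $q_3X^{3n}\act{a}p_jX^{3n}$ matched by $q_3'X^{3n}\act{a}p_j'X^{3n}$ — yields $p_jX^{3n}\simeq p_j'X^{3n}$.

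The main obstacle is the branching, rather than merely weak, discipline: every silent step used in a response must be state preserving. This is precisely why the counter manipulation is routed through the generating processes $G$ and $G'$, so that all the pumping $\tau$-steps are absorbed, and why the divisibility-by-$3$ test is realized by the silent-free clock fragment rather than by silent transitions. The care-demanding part of the verification will be to confirm that Attacker gains nothing either from selecting arbitrary token counts in the $a$-moves of $G$ and $G'$, or from deciding which of the two configurations offered by Defender continues the round; in every such case Defender responds by matching the token count through the generating process and keeping the configuration within the family described above.
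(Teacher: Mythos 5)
Your proposal is correct and follows essentially the same route as the paper: the same chain of configurations, the same use of the generating processes to keep all the pumping $\tau$-steps state preserving, and the same silent-free clock fragment on $t_3,t_1$ to pin the counter to $3n$, merely recast from the paper's game narrative into an explicit branching bisimulation for one direction and a step-by-step deduction from Definition~\ref{def:beq} for the converse. If anything, your converse (Attacker refutes via $p_iX^{n}\stackrel{a}{\longrightarrow}q_1X^{n}$ followed by the $t$-probe) is more precise than the paper's closing remark that Attacker wins by playing $p_iX^{n}\stackrel{\tau}{\longrightarrow}G'X^{n}$, a move which on its own leads to a configuration $(G'X^{n},p_i'X^{n})$ that Defender always survives since $p_i'X^{n}\simeq G'X^{n}$ unconditionally.
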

\begin{proof}
The crucial point here is that the copycat rules $p_i \stackrel{\tau}{\longrightarrow}G'$ and $p_i' \stackrel{\tau}{\longrightarrow}G'$, which syntactically identify what $p_{i}X^{n}$ and $p_{i}'X^{n}$ may reach in one silent step, do not automatically create a Defender's Forcing situation.
The reason is that although $p_{i}'X^{n}\rightarrow G'X^{n}$, since $p_i'X^{n}\stackrel{\tau}{\longrightarrow}G'X^{n}$ is the only action of $p_i'X^{n}$, it might well be that $p_{i}X^{n}\stackrel{\iota}{\longrightarrow} G'X^{n}$.
For branching bisimulation syntactical Defender's Forcing is insufficient.
One needs Defender's Forcing that works at semantic level.
Let's take a look at the development of the game in some detail.
\begin{enumerate}
\item
If Attacker plays $p_{i}X^{n}\stackrel{\tau}{\longrightarrow}G'X^{n}$, Defender plays $p_{i}'X^{n}\stackrel{\tau}{\longrightarrow}G'X^{n}$.
By Lemma~\ref{computation-lemma} this response is equivalent to any other response from Defender.
\item
If Attacker chooses the action $p_iX^n \act{a} q_1X^n$, Defender responds with $p_i'X^n \rightarrow G'X^n \rightarrow^{*} G'X^{3n} \act{a} q_1'X^{3n}$, making use of Lemma~\ref{computation-lemma}.
Attacker's optimal move is to choose $(q_1X^n, q_1'X^{3n})$ to be the next configuration.
\item
Now Attacker would not do a $t$ action since $t_{3}X^n \beq t_{1}X^{3n}$.
It chooses the action $a$ and the new configuration $(q_2X^n, q_2'X^{3n})$.
\item
Then we come to another semantic Defender's Forcing.
If Attacker plays $q_{2}X^{n}\stackrel{\tau}{\longrightarrow}GX^{n}$, Defender plays $q_{2}'X^{n}\stackrel{\tau}{\longrightarrow}GX^{3n}$; and vice versa.
\item
If Attacker chooses the transition $q_2'X^{3n} \act{a} q_3'X^{3n}$, Defender's response is $q_2X^n \stackrel{\tau}{\longrightarrow} GX^n \Longrightarrow  GX^{3n} \act{a} q_3X^{3n}$, exploiting again Lemma~\ref{computation-lemma}.
Attacker's nontrivial choice of the new configuration is $(q_3X^{3n}, q_3'X^{3n})$.
\item
Finally Attacker would not choose a $t_{1}$ action since $t_{1}X^{3n} \beq t_{1}X^{3n}$.
So after an $a$ action, the configuration becomes $(q_jX^{3n}, q_j'X^{3n})$.
\end{enumerate}
It is easy to see that the configuration $(q_jX^{3n}, q_j'X^{3n})$ is optimal for both Attacker and Defender.
If $q_jX^{3n}\simeq q_j'X^{3n}$ then Defender's Forcing described above is justified.
If $q_jX^{3n}\not\simeq q_j'X^{3n}$ the forcing is ineffective since Attacker can choose to play $p_{i}X^{n}\stackrel{\tau}{\longrightarrow}G'X^{n}$ and wins.
\qed\end{proof}

The main result of the section follows easily from Lemma~\ref{2014-01-19} and its proof.

\begin{theorem}\label{undecidability-of-PRS-with-state}
On $\mathrm{nOCN}$ every relation $\mathcal{R}$ satisfying $\simeq\;\subseteq\mathcal{R}\subseteq\;\approx$ is undecidable.
\end{theorem}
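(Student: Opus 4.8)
The plan is to reduce the halting problem of two-counter Minsky machines to the problem of deciding membership in $\mathcal{R}$, for an arbitrary fixed $\mathcal{R}$ with $\simeq\ \subseteq\mathcal{R}\subseteq\ \approx$. Given a Minsky machine $\mathcal{M}$ as in Section~\ref{sec-Defender-Forcing-for-Branching-Bisimulation}, I would construct an nOCN in which the contents $(n_1,n_2)$ of the two counters are encoded by the place content $X^{N}$ with $N=2^{n_1}3^{n_2}$, and in which each instruction ``$i:I_i$'' is translated into a local gadget over a pair of states $p_i,p_i'$ (plus auxiliary states), so that a configuration of $\mathcal{M}$ with control at $i$ and counters coded by $N$ is represented by the pair $(p_iX^{N},p_i'X^{N})$. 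The invariant to be maintained is: Defender has a winning strategy in the branching bisimulation game from $(p_iX^{N},p_i'X^{N})$ if and only if the run of $\mathcal{M}$ from that configuration never reaches the instruction $m:\textrm{halt}$; moreover the same gadgets, read as a weak bisimulation game, give the same equivalence, so that $p_iX^{N}\simeq p_i'X^{N}$ exactly when $p_iX^{N}\approx p_i'X^{N}$.

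First I would supply the gadgets. The increment instructions ``$c_1:=c_1+1;\textrm{ goto }j$'' and ``$c_2:=c_2+1;\textrm{ goto }j$'' are handled as in Fig.~\ref{fig:ocn_add}: Lemma~\ref{2014-01-19} already states that the gadget there reduces the game from $(p_iX^{N},p_i'X^{N})$ to the game from $(p_jX^{3N},p_j'X^{3N})$ (the case ``$\times 2$'' is entirely symmetric), the multiplication being realized through the generating processes $G,G'$ and Lemma~\ref{computation-lemma}, with the cyclic sub-gadget $t_3,t'',t',t_1$ absorbing the divisibility bookkeeping. The remaining case is the test-and-decrement instruction ``if $c_e=0$ then goto $j$; otherwise $c_e:=c_e-1;\textrm{ goto }k$'', for which I would prove the analogue of Lemma~\ref{2014-01-19}: a gadget whose effect is that the game from $(p_iX^{N},p_i'X^{N})$ reduces to the game from $(p_jX^{N},p_j'X^{N})$ when the relevant prime ($2$ for $e=1$, $3$ for $e=2$) does not divide $N$, and to the game from $(p_kX^{N/d},p_k'X^{N/d})$, $d\in\{2,3\}$, when it does. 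As in the proof of Lemma~\ref{2014-01-19}, the copycat rules yield only a \emph{semantic} Defender's Forcing: the branch Defender enters is justified only once the downstream pair is known to be equivalent, and Defender must never replace a state-preserving silent step by a change-of-state one (the pitfall exhibited by Example~\ref{example}); generating processes guarantee that every silent step Defender performs before the matching visible action is state preserving. Finally the halt instruction $m$ is translated into a gadget ensuring $p_mX^{N}\not\approx p_m'X^{N}$ for every $N$, e.g.\ by giving one of the two states a visible transition on a label the other never offers, so that Defender is stuck at $p_m$.

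Then I would compose the gadgets along the run of $\mathcal{M}$, which is unique since Minsky machines are deterministic. Chaining the per-instruction reductions from the start configuration $(p_1X,p_1'X)$ --- instruction $1$ with both counters $0$, so $N=2^{0}3^{0}=1$ --- gives the following. If $\mathcal{M}$ halts, the game eventually reaches some $(p_mX^{N},p_m'X^{N})$ with $p_mX^{N}\not\approx p_m'X^{N}$; pulling this back through the reductions yields $p_1X\not\approx p_1'X$, hence $(p_1X,p_1'X)\notin\mathcal{R}$. If $\mathcal{M}$ does not halt, the game runs forever with Defender always able to respond, and collecting the configurations that arise together with the auxiliary pairs furnished by the gadget proofs yields an explicit branching bisimulation containing $(p_1X,p_1'X)$, so by Lemma~\ref{lm:game1} $p_1X\simeq p_1'X$ and $(p_1X,p_1'X)\in\mathcal{R}$. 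Thus $\mathcal{M}$ halts iff $(p_1X,p_1'X)\notin\mathcal{R}$, a computable reduction from the complement of the halting problem, so $\mathcal{R}$ is undecidable. One routine point to discharge is normedness of the constructed OCN: from every reachable state the counter must be emptiable to a state with no outgoing transition, which is arranged by adding, uniformly on both the unprimed and primed sides, fresh copycat token-burning rules (on a label distinct from the ones above) leading to a common dead state, leaving all the equivalences intact.

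The step I expect to be the main obstacle is the design and verification of the test-and-decrement gadget. There the forcing has to work purely at the semantic level: the branch is selected by a divisibility condition on $N$ that is itself tested by cycling the single counter, and one must check that Attacker cannot exploit any discrepancy between state-preserving and change-of-state silent transitions to escape the intended branch --- precisely the delicate interplay between Lemma~\ref{computation-lemma} and Definition~\ref{def:beq} that separates the branching game from the weak game used by Mayr.
\qed
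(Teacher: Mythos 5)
Your proposal is correct and follows essentially the same route as the paper: reuse the multiplication gadget of Fig.~\ref{fig:ocn_add} and Lemma~\ref{2014-01-19} for increments, build an analogous semantic-forcing gadget for the divisibility test and division needed by test-and-decrement, chain the gadgets along the (deterministic) run of $\mathcal{M}$ starting from $(p_1X,p_1'X)$, and observe that the reduction simultaneously witnesses ``$\mathcal{M}$ halts iff $p_1X\not\simeq p_1'X$'' and ``$\mathcal{M}$ halts iff $p_1X\not\approx p_1'X$'', which sandwiches every $\mathcal{R}$ with $\simeq\;\subseteq\mathcal{R}\subseteq\;\approx$. The paper's own proof is in fact terser than yours, likewise deferring the division gadget to ``similar fashion'' and noting that the remainder of Mayr's reduction involves no silent transitions.
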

\begin{proof}
Dividing a number by a constant can be encoded in similar fashion.
The rest of Mayr's reduction does not refer to any silent transitions.
It follows that we can construct a reduction witnessing that ``$\mathcal{M}$ halts iff $p_1X \not\beq p_1'X$''.
As a matter of fact the reduction supports the stronger correspondence stated as follows: ``$\mathcal{M}$ halts iff $p_1X \not\approx p_1'X$''.
\qed\end{proof}

\section{Undecidability of nPA}\label{sec-Undecidability-of-nPA}

Following~\cite{Srba2003}, our main undecidability result is proved by reducing PCP (Post's Correspondence Problem) to the branching bisimilarity checking problem on nPA.
Suppose $\Sigma$ is a finite set of symbols and $\Sigma^{+}$ is the set of nonempty finite strings over $\Sigma$.
The size of $\Sigma$ is at least two.
PCP is defined as follows.
\nprob{Post's Correspondence Problem}{$\{ (u_1, v_1), (u_2, v_2) \dots (u_n, v_n) ~|~ u_i ,v_i
  \in \Sigma^{+} \}$}{Are there $i_1, i_2, \dots i_m\in \{1, 2, \dots,n \}$ with $m\geq 1$ such that $u_{i_1}u_{i_2}\dots u_{i_m} = v_{i_1}v_{i_2}\dots v_{i_m}$?}
We will fix a PCP instance $\textrm{INST}{=}\{ (u_1, v_1), (u_2, v_2) \dots (u_n, v_n) ~|~ u_i,v_i \in \Sigma^{+} \}$ in this section.
 Our task is to construct a normed process algebra $\mathcal{G} {=} (\mathcal{C}, \mathcal{A}, \Delta)$ containing two process constants $X,Y$ that render true the following equivalence.
\begin{equation}\label{eq:main}
\textrm{``INST has a solution''}\ \mathrm{iff}\ X \beq Y\ \mathrm{iff}\ X\weq Y.
\end{equation}
We will prove~(\ref{eq:main}) by validating the following statements:
\begin{itemize}
\item ``If INST has a solution then $X\beq Y$''. This is Lemma~\ref{lm:defender_ws} of Section~\ref{sec:generate_phase}.
\item ``If INST has no solution then $X\not\weq Y$''. This is Lemma~\ref{lm:attacker_ws} of Section~\ref{sec:generate_phase}.
\end{itemize}
As $X\simeq Y$ implies $X\approx Y$, the main theorem of the paper follows from~(\ref{eq:main}).
\begin{theorem}\label{thm:main}
On $\mathrm{nPA}$ every relation $\mathcal{R}$ satisfying $\simeq\;\subseteq\mathcal{R}\subseteq\;\approx$ is undecidable.
\end{theorem}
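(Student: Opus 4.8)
The plan is to reduce PCP to the membership problem of $\mathcal{R}$, exactly as announced by~(\ref{eq:main}). Given an instance $\mathrm{INST}$ we build a normed process algebra $\mathcal{G}=(\mathcal{C},\mathcal{A},\Delta)$ with distinguished constants $X,Y$, and the theorem follows once~(\ref{eq:main}) holds: if $\mathrm{INST}$ has a solution then $X\beq Y$, hence $X\mathrel{\mathcal{R}}Y$ because $\beq\subseteq\mathcal{R}$; if $\mathrm{INST}$ has no solution then $X\not\weq Y$, hence $X\not\mathrel{\mathcal{R}}Y$ because $\mathcal{R}\subseteq\weq$. So ``$\mathrm{INST}$ has a solution'' is equivalent to $X\mathrel{\mathcal{R}}Y$, and since the map $\mathrm{INST}\mapsto(\mathcal{G},X,Y)$ is plainly effective, undecidability of PCP forces undecidability of $\mathcal{R}$. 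The real content is therefore the construction of $\mathcal{G}$ and the two statements quoted as Lemmas~\ref{lm:defender_ws} and~\ref{lm:attacker_ws}, which I would organise as follows.

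\emph{Generating phase.} Following Srba~\cite{Srba2003}, from $X$ (and symmetrically from $Y$) one guesses an index sequence $i_1,\dots,i_m\in\{1,\dots,n\}$ with $m\ge1$ and records enough information to later reconstruct both concatenations $u_{i_1}\cdots u_{i_m}$ and $v_{i_1}\cdots v_{i_m}$; doing this needs both the sequential stack discipline and parallel composition, which is exactly where nPA outstrips nBPA and nBPP. The $X$-side and the $Y$-side run essentially the same generating routine, except that one side also carries a copycat-style silent move into a generating process $G$ in the sense of Section~\ref{sec-Defender-Forcing-for-Branching-Bisimulation}, where $G\stackrel{\tau}{\longrightarrow}GX$, $GX\stackrel{\tau}{\longrightarrow}G$ and $GX^k\beq G$ by Lemma~\ref{computation-lemma}, so that Defender can re-align stack heights without changing state.

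\emph{Checking phase.} Once the information is in place a checker replays the two strings symbol by symbol, emitting the shared symbol; terminating successfully corresponds to $u_{i_1}\cdots u_{i_m}=v_{i_1}\cdots v_{i_m}$, i.e.\ to a genuine PCP solution. This is where the subtlety flagged in Section~\ref{sec-Defender-Forcing-for-Branching-Bisimulation} bites: the copycat and generating gadgets must be laid out so that every silent move Defender uses to set up the comparison is state preserving (discharged by Lemma~\ref{computation-lemma}), so that Defender's Forcing is only \emph{justified a posteriori}, namely precisely when the resulting checker configuration is branching bisimilar, which happens iff the two strings coincide. Alongside this I must add, for every process constant, an unconditional terminating rule so that $\mathcal{G}$ is normed, and verify that these rules occur identically on the two sides so that they do not perturb~(\ref{eq:main}).

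\emph{The two directions and the obstacle.} For ``solution $\Rightarrow X\beq Y$'' I would exhibit an explicit candidate relation containing $(X,Y)$ and every configuration pair reachable along the intended play, and check the clauses of Definition~\ref{def:beq}, leaning on Lemma~\ref{computation-lemma} to discharge the generating gadgets; equivalently one presents a Defender winning strategy and appeals to Lemma~\ref{lm:game1}. For ``no solution $\Rightarrow X\not\weq Y$'' I would describe an Attacker winning strategy in the weak bisimulation game: Attacker lets Defender commit to an index sequence and to the recorded data, then drives the checker; since no index sequence equalises the two strings, a mismatched symbol or a length discrepancy must surface, at which point Attacker moves into a configuration with one component deadlocked and the other not, winning by Lemma~\ref{lm:game1}. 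The main obstacle — and the place where the Section~\ref{sec-Defender-Forcing-for-Branching-Bisimulation} technique earns its keep — is calibrating the gadgets so that simultaneously: when there is no solution, Defender's only silent detours are either state preserving or immediately losing, so that the checking discrepancy cannot be concealed; and when there is a solution, those same detours suffice for Defender to mirror every Attacker move. Achieving this balance while keeping every constant normed is the delicate part of the argument.
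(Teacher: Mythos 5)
Your proposal follows essentially the same route as the paper: a reduction from PCP in the style of Srba, with the sandwich argument ($\beq\subseteq\mathcal{R}\subseteq\weq$ plus ``solution $\Rightarrow X\beq Y$'' and ``no solution $\Rightarrow X\not\weq Y$'') yielding undecidability of every intermediate relation, a generating phase driven by semantically justified Defender's Forcing over generating processes discharged via Lemma~\ref{computation-lemma}, and a checking phase comparing the two concatenations. What you give is of course only the architecture --- the actual gadget rules and the verification of the two lemmas are where the work lies --- but the decomposition and the key ideas coincide with the paper's.
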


In the rest of the section, we firstly define $\mathcal{G}$, and then argue in several steps how the game based on $\mathcal{G}$ works in Defender's favour if INST has a solution.

\subsection{The nPA Game}\label{sec-Defining-the-Game}

The construction of $\mathcal{G}=(\mathcal{C},\mathcal{A},\Delta)$ from INST is based on Srba's reduction~\cite{Srba2003}.
Substantial amount of redesigning effort is necessary to make it work for the {\em branching} bisimilarity on the {\em normed} PA.
The set $\mathcal{A}$ of actions is defined by
\begin{eqnarray*}
\mathcal{A} &=& \Lambda \cup \mathcal{N} \cup \Sigma \cup \{\tau\},
\end{eqnarray*}
where $\Lambda = \{\lambda_U, \lambda_V, \lambda_D, \lambda_I, \lambda_S, \lambda_Z\}$, $\mathcal{N} = \{1, \dots, n\}$ and $\Sigma,n$ are from INST.
The set $\mathcal{C}$ of process constants is defined by
\begin{eqnarray*}
\mathcal{C} &=& \{ X, Y, Z, I, S, C, C', D, G, G', G_u, G_v, G_v' \} \cup \mathcal{U} \cup \mathcal{V} \cup \mathcal{W}, \\
\mathcal{U} &=& \{U_i \mid i \in \mathcal{N}\}, \\
\mathcal{V} &=& \{V_i \mid i \in \mathcal{N}\}, \\
\mathcal{W} &=& \{ W(\omega, i), W(\omega,0) \mid \omega\in (\mathcal{SF}(u_i)\cup \mathcal{SF}(v_i)) \ \mathrm{and}\  i \in \mathcal{N}\},
\end{eqnarray*}
where for each $\omega \in \Sigma^{*}$, the notation $\mathcal{SF}(\omega)$ stands for the set of suffixes of $\omega$.
The set of transition rules is given in Fig.~\ref{fig:delta}.
It is clear from these rules that $\mathcal{G}$ is indeed normed.
In particular $P\Longrightarrow\epsilon$ for all $P\in\mathcal{U}\cup\mathcal{V}\cup\mathcal{W}$.

\begin{figure*}[t]
\begin{center}
\begin{tabular}{|l|} \hline
$\ X \act{\lambda_U} D  {\,\|\,} G_v$,\ \ $X \stackrel{\tau}{\longrightarrow} D$;\ \ \ \ $Y \stackrel{\tau}{\longrightarrow} D$;\ \ \ \ $D \stackrel{\tau}{\longrightarrow} D {\,\|\,}  G_u$,\ \ $D \act{\lambda_D} C$; \\
$\ G_u\stackrel{\tau}{\longrightarrow} G_uU_i$,\ \ $G_u \act{\lambda_U} G_v U_i$; \ \ \ \ $G_u \stackrel{\tau}{\longrightarrow} G_v'$,\ \ $G_v' \stackrel{\tau}{\longrightarrow} G_v' V_i$,\ \ $G_v' \stackrel{\tau}{\longrightarrow} Z$; \\
\hline
$\ G_v \stackrel{\tau}{\longrightarrow}G_v V_i$,\ \ $G_v \stackrel{\tau}{\longrightarrow} \epsilon$,\ \ $G_v \act{\lambda_V} Z$;\ \ \ \ $Z\stackrel{\tau}{\longrightarrow} \epsilon$,\ \ $Z \act{\lambda_Z} \epsilon$; \\
\hline
$\ C \act{\lambda_I} I$,\ \ $C \act{\lambda_S} S$,\ \ $C \stackrel{\tau}{\longrightarrow} C  {\,\|\,}  G$,\ \ $C \stackrel{\tau}{\longrightarrow} C {\,\|\,}  G_v$; \\
$\ G \stackrel{\tau}{\longrightarrow} GU_i$,\ \ $G \stackrel{\tau}{\longrightarrow} GV_i$,\ \ $G \stackrel{\tau}{\longrightarrow} \epsilon$; \\
\hline
$\ I \act{\lambda_I} C'$,\ \ $I \act{i}I$;\ \ \ \ $S \act{\lambda_S} C'$,\ \ $S \act{a} S$;\ \ \ \ $C' \stackrel{\tau}{\longrightarrow} C'  {\,\|\,}  G'$,\ \ $C' \stackrel{\tau}{\longrightarrow} \epsilon$; \\
$\ G' \stackrel{\tau}{\longrightarrow} G'U_i$,\ \ $G' \stackrel{\tau}{\longrightarrow} G' V_i$,\ \ $G'\stackrel{\tau}{\longrightarrow} G'W$,\ \ $G'\stackrel{\tau}{\longrightarrow}G_{v}$,\ \ $G'\stackrel{\tau}{\longrightarrow} Z$; \\
\hline
$\ U_i \stackrel{\tau}{\longrightarrow} W( u_i, i)$,\ \ $V_i \stackrel{\tau}{\longrightarrow} W(v_i, i)$; \\
$\ W( a\omega, i) \act{a} W(\omega, i)$,\ \ $W( a\omega,0) \act{a} W(\omega,0)$,\ \ $W(\omega, i) \act{i} W(\omega,0)$,  \\
$\ W(a\omega, i) \stackrel{\tau}{\longrightarrow} W(\omega, i)$,\ \ $W(a\omega,0) \stackrel{\tau}{\longrightarrow} W(\omega,0)$,\ \ $W(\omega, i) \stackrel{\tau}{\longrightarrow} W(\omega,0)$,\ \ $W(\epsilon,0) \stackrel{\tau}{\longrightarrow} \epsilon$.$\ $ \\ \hline\hline
\ In the above rules, $i$ ranges over $\{1, \dots, n\}$, $a$ ranges over $\Sigma$, and $W$ ranges over $\mathcal{W}$. \\ \hline
\end{tabular}
\end{center}
\caption{Transition Rules for the nPA Game. \label{fig:delta}}
\end{figure*}

We write $\mathbb{P}_u$, respectively $\mathbb{P}_v$, for a sequential composition of members of $\mathcal{U}$, respectively $\mathcal{V}$.
Similarly we write $\mathbb{P}$, respectively $\mathbb{Q}$, for a sequential composition of members of $\mathcal{U}\cup\mathcal{V}$, respectively $\mathcal{U}\cup\mathcal{V}\cup\mathcal{W}$.
If for example the sequence $u$ is empty, $\mathbb{P}_u$ is understood to denote $\epsilon$.

\subsection{Defender's Generator}\label{sec-Generator-for-Defender}

To explain how the reduction works we start with the generators introduced by the process algebra.
A generator should be able to not only produce what is necessary but also do away with what has been produced.
The process $D$ for instance can induce circular silent transition sequence of the form
\[
D \stackrel{\tau}{\longrightarrow} D {\,\|\,} G_u \Longrightarrow  D {\,\|\,} G_u\mathbb{P}_u \stackrel{\tau}{\longrightarrow} D {\,\|\,} G_v'\mathbb{P}_u \Longrightarrow  D {\,\|\,} G_v'\mathbb{P}_v\mathbb{P}_u \Longrightarrow  D.
\]
By Lemma~\ref{computation-lemma} all the processes appearing in the above sequence are branching bisimilar.
Notice that the only reason the process constant $G_v'$ is introduced is to make available the above circular sequence.
The constant $G_v'$ is necessary because $G_u$ cannot reach $G_v$ via silent moves.
Similar circular silent transition sequences are also available for $C$ and $C'$.

\begin{lemma}\label{cor:gen}
Suppose $P \in \{D,C,C'\}$ and $P\Longrightarrow  P {\,\|\,} Q $.
Then $P {\,\|\,} Q\Longrightarrow P$.
\end{lemma}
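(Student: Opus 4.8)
The plan is to reduce the claim to a single structural observation: the residue $Q$ is nothing but \emph{debris} spawned by $P$ during the silent run $P\Longrightarrow P\,\|\,Q$, and every piece of that debris can silently run itself down to $\epsilon$. Indeed, once we know $Q\Longrightarrow\epsilon$, the structural rule for $\,\|\,$ lets the left copy of $P$ stay idle while $Q$ drains, giving $P\,\|\,Q\Longrightarrow P\,\|\,\epsilon=P$. So the whole proof is about pinning down the possible shapes of $Q$; it is the uniform generalisation of the circular sequence displayed just before the lemma.

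First I would fix the \emph{generator alphabet} $\mathcal{D}=\{G,G',G_u,G_v,G_v',Z\}\cup\mathcal{U}\cup\mathcal{V}\cup\mathcal{W}$ and prove, by induction on the length of a silent derivation, the invariant: if $P\in\{D,C,C'\}$ and $P\Longrightarrow R$, then $R=P_0\,\|\,Q_1\,\|\,\dots\,\|\,Q_m$, where each $Q_j$ is a \emph{sequential} composition of constants from $\mathcal{D}$, and $P_0$ is either $P$ itself or (only possible when $P=C'$) empty. The content of the induction is a scan of Fig.~\ref{fig:delta}: the only rules whose right-hand side contains a ``$\|$'' belong to $X,D,C,C'$, so a silent step applied inside some $Q_j$ acts on its head constant $B\in\mathcal{D}$ and rewrites $B$ either to $\epsilon$, to a single $\mathcal{D}$-constant ($G_u\stackrel{\tau}{\longrightarrow}G_v'$, $U_i\stackrel{\tau}{\longrightarrow}W(u_i,i)$, $G'\stackrel{\tau}{\longrightarrow}Z$, the silent $W$-rules, and so on), or to a length-two sequential word over $\mathcal{D}$ ($G\stackrel{\tau}{\longrightarrow}GU_i$, $G_v'\stackrel{\tau}{\longrightarrow}G_v'V_i$, $G'\stackrel{\tau}{\longrightarrow}G'W$, $\dots$); in each case $Q_j$ stays a sequential word over $\mathcal{D}$ and no new parallel component appears. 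A silent step on the $P_0$-component is either a spawn $P\stackrel{\tau}{\longrightarrow}P\,\|\,H$ with $H\in\mathcal{D}$, adding one length-one $\mathcal{D}$-word, or the collapse $C'\stackrel{\tau}{\longrightarrow}\epsilon$. The same scan records that $D,C,C'$ are never produced by a silent step, so the $P$-component appearing in $P\,\|\,Q$ is literally the constant we started from, and $Q=Q_1\,\|\,\dots\,\|\,Q_m$.

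Next I would check the elementary fact that every $B\in\mathcal{D}$ admits a silent run to $\epsilon$ whose every intermediate state is a single constant: $G\stackrel{\tau}{\longrightarrow}\epsilon$; $G_v\stackrel{\tau}{\longrightarrow}\epsilon$; $Z\stackrel{\tau}{\longrightarrow}\epsilon$; $G'\stackrel{\tau}{\longrightarrow}Z\stackrel{\tau}{\longrightarrow}\epsilon$; $G_v'\stackrel{\tau}{\longrightarrow}Z\stackrel{\tau}{\longrightarrow}\epsilon$; $G_u\stackrel{\tau}{\longrightarrow}G_v'\stackrel{\tau}{\longrightarrow}Z\stackrel{\tau}{\longrightarrow}\epsilon$; and $U_i\stackrel{\tau}{\longrightarrow}W(u_i,i)\Longrightarrow\epsilon$, $V_i\stackrel{\tau}{\longrightarrow}W(v_i,i)\Longrightarrow\epsilon$, where a $W(\omega,i)$ is drained to $\epsilon$ by the silent $W$-rules of Fig.~\ref{fig:delta} — this last point being exactly the remark already recorded that $P\Longrightarrow\epsilon$ for every $P\in\mathcal{U}\cup\mathcal{V}\cup\mathcal{W}$. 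Since none of these runs lengthens the process, a sequential word $B_1\cdots B_k$ over $\mathcal{D}$ can be cleared head-first: drain $B_1$ to $\epsilon$ while $B_2\cdots B_k$ sits frozen behind it, then iterate, so $B_1\cdots B_k\Longrightarrow\epsilon$. Draining the parallel components of $Q$ independently then gives $Q\Longrightarrow\epsilon$, hence $P\,\|\,Q\Longrightarrow P$, as required.

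The hard part will be the structural invariant in the second paragraph, and even there the difficulty is only that of a complete but routine case analysis over the rules of Fig.~\ref{fig:delta}; its entire substance is the two bookkeeping facts that (i) no generator rule introduces a ``$\|$'' and (ii) $D,C,C'$ cannot be re-created by silent moves. Everything downstream is the bounded, non-growing draining of a finite word over $\mathcal{D}$, which is immediate from the displayed runs.
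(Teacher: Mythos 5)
Your proof is correct and rests on the same observation that underlies the paper's treatment: every component that $D$, $C$ or $C'$ spawns silently can itself be silently drained to $\epsilon$, so the spawning is reversible. The difference is one of rigour rather than of route: the paper never argues the lemma for an arbitrary $Q$, but only exhibits the representative circular sequences $D\Longrightarrow D\,\|\,G_u\mathbb{P}_u\Longrightarrow\cdots\Longrightarrow D$ and their analogues for $C$ and $C'$, whereas you add the reachability invariant --- every silent derivative of $P$ is $P$ in parallel with sequential words over the generator alphabet, no silent rule of a generator constant introduces a new parallel component, and $D$, $C$, $C'$ are never silently re-created --- which is exactly what is needed to discharge the universal quantification over $Q$. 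Your case analysis of the silent rules in Fig.~\ref{fig:delta} and the head-first draining of sequential words both check out, so your version is, if anything, more complete than the paper's.
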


\begin{corollary}\label{lm:forcing}
The following equalities are valid  for all $\mathbb{P}_u,\mathbb{P}_v,\mathbb{P},\mathbb{Q}$.
  \begin{enumerate}
  \item $D \beq D {\,\|\,} G_u\mathbb{P}_u \beq D {\,\|\,} G_{v}'\mathbb{P}_v\mathbb{P}_u  \beq D {\,\|\,} Z\mathbb{P}_v\mathbb{P}_u \beq D {\,\|\,} \mathbb{P}_v\mathbb{P}_u \beq D {\,\|\,}  W\mathbb{P}_v\mathbb{P}_u$;
  \item $C \beq C {\,\|\,} G\mathbb{ P} \beq C {\,\|\,} \mathbb{ P} \beq C {\,\|\,} W\mathbb{P} \beq C {\,\|\,} G_v\mathbb{P}_v$;
  \item $C' \beq C' {\,\|\,}  G'\mathbb{Q} \beq C' {\,\|\,}  G_{v}\mathbb{Q} \beq C' {\,\|\,} Z\mathbb{Q} \beq C' {\,\|\,} \mathbb{Q}$.
  \end{enumerate}
\end{corollary}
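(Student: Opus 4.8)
The plan is to derive all three chains of Corollary~\ref{lm:forcing} from a single uniform argument. For a target process of the form $P{\,\|\,}R$ with $P\in\{D,C,C'\}$, suppose I can exhibit a purely silent computation $P\Longrightarrow P{\,\|\,}R$. Then Lemma~\ref{cor:gen} supplies the return leg $P{\,\|\,}R\Longrightarrow P$, so that $P\Longrightarrow P{\,\|\,}R\Longrightarrow P$, and Lemma~\ref{computation-lemma} (with $P'':=P$) gives $P{\,\|\,}R\beq P$. By transitivity of $\beq$ the chains then close up. Hence no bisimulation has to be constructed by hand: the whole statement reduces to the reachability facts $P\Longrightarrow P{\,\|\,}R$, one for each parallel component $R$ listed on the right-hand sides.

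Items~2 and~3 present no difficulty. From $C$ the rule $C\stackrel{\tau}{\longrightarrow}C{\,\|\,}G$ starts a generator which, by $G\stackrel{\tau}{\longrightarrow}GU_i$ and $G\stackrel{\tau}{\longrightarrow}GV_i$, can emit to its right an arbitrary sequence $\mathbb{P}$ over $\mathcal{U}\cup\mathcal{V}$; then $G\stackrel{\tau}{\longrightarrow}\epsilon$ erases $G$, while $U_i\stackrel{\tau}{\longrightarrow}W(u_i,i)$, $V_i\stackrel{\tau}{\longrightarrow}W(v_i,i)$ and the silent $W$-rules turn any emitted constant into any $W\in\mathcal{W}$ silently reachable from it. This yields $C\Longrightarrow C{\,\|\,}G\mathbb{P}$, $C\Longrightarrow C{\,\|\,}\mathbb{P}$ and $C\Longrightarrow C{\,\|\,}W\mathbb{P}$, and $C\stackrel{\tau}{\longrightarrow}C{\,\|\,}G_v$ with $G_v\stackrel{\tau}{\longrightarrow}G_vV_i$ yields $C\Longrightarrow C{\,\|\,}G_v\mathbb{P}_v$. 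Item~3 is handled the same way from $C'\stackrel{\tau}{\longrightarrow}C'{\,\|\,}G'$, using $G'\stackrel{\tau}{\longrightarrow}G'U_i$, $G'\stackrel{\tau}{\longrightarrow}G'V_i$ and $G'\stackrel{\tau}{\longrightarrow}G'W$ to emit an arbitrary $\mathbb{Q}$, then $G'\stackrel{\tau}{\longrightarrow}G_v$, $G'\stackrel{\tau}{\longrightarrow}Z$ and $Z\stackrel{\tau}{\longrightarrow}\epsilon$ to reach $C'{\,\|\,}G'\mathbb{Q}$, $C'{\,\|\,}G_v\mathbb{Q}$, $C'{\,\|\,}Z\mathbb{Q}$ and $C'{\,\|\,}\mathbb{Q}$.

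For item~1 the first five targets are reached along the circular computation already displayed just before Lemma~\ref{cor:gen}: $D\stackrel{\tau}{\longrightarrow}D{\,\|\,}G_u\Longrightarrow D{\,\|\,}G_u\mathbb{P}_u$ (iterating $G_u\stackrel{\tau}{\longrightarrow}G_uU_i$), then $\stackrel{\tau}{\longrightarrow}D{\,\|\,}G_v'\mathbb{P}_u\Longrightarrow D{\,\|\,}G_v'\mathbb{P}_v\mathbb{P}_u$ (iterating $G_v'\stackrel{\tau}{\longrightarrow}G_v'V_i$), then $\stackrel{\tau}{\longrightarrow}D{\,\|\,}Z\mathbb{P}_v\mathbb{P}_u\stackrel{\tau}{\longrightarrow}D{\,\|\,}\mathbb{P}_v\mathbb{P}_u$; every initial segment of this run has the shape $D{\,\|\,}R$, and the empty-sequence conventions dispatch the degenerate cases.

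The step I expect to cost the most is the last target of item~1, $D{\,\|\,}W\mathbb{P}_v\mathbb{P}_u$. Unlike $G$ and $G'$, the generators $G_u$ and $G_v'$ emit in two ordered phases --- all $\mathcal{U}$-constants first, then all $\mathcal{V}$-constants --- so every silent descendant of $D$ has the shape ``(a partly evolved head of $\mathbb{P}_v$)(the remainder of $\mathbb{P}_v$)(an untouched $\mathbb{P}_u$)''. When $W$ can be produced from some $V_j$ by silent moves the recipe still applies: emit one further $V_j$ at the end of the $\mathcal{V}$-phase and evolve it into $W$; this also settles the subcase $\mathbb{P}_v=\epsilon$. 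The remaining subcase --- $W$ producible only from a $\mathcal{U}$-constant, together with $\mathbb{P}_v\neq\epsilon$ --- is not silently reachable from $D$, hence out of reach of Lemmas~\ref{cor:gen} and~\ref{computation-lemma}; there one has to argue directly that $D{\,\|\,}W\mathbb{P}_v\mathbb{P}_u\beq D$, exploiting that both sides collapse to $D$ by silent moves (every constant of $\mathcal{U}\cup\mathcal{V}\cup\mathcal{W}$ reaches $\epsilon$) and that $D$, routing through its $\mathcal{U}$-phase, can reproduce whatever visible action the component $W\mathbb{P}_v\mathbb{P}_u$ performs. Turning this into a branching bisimulation relating $D{\,\|\,}W\mathbb{P}_v\mathbb{P}_u$ and its descendants to $D$ and its descendants is the only genuinely combinatorial part of the argument.
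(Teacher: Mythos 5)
Your proposal is correct and follows the paper's own route: the appendix proof consists exactly of the three circular silent sequences $P\Longrightarrow P{\,\|\,}R\Longrightarrow P$ for $P\in\{D,C,C'\}$, closed up by Lemma~\ref{computation-lemma}, so no bisimulation is constructed by hand there either. The only divergence is the point you flag in item~1: the paper's displayed step $D{\,\|\,}\mathbb{P}_v\mathbb{P}_u\stackrel{\tau}{\longrightarrow}D{\,\|\,}W\mathbb{P}_v\mathbb{P}_u$ is really the replacement of the leading constant by one of its silent $W$-descendants, so it only yields those $W$ reachable from some $V_j$ (or, when $\mathbb{P}_v=\epsilon$, from some $U_i$); the paper simply does not address $W(\omega,i)$ with $\omega\in\mathcal{SF}(u_i)\setminus\mathcal{SF}(v_i)$ standing in front of a nonempty $\mathbb{P}_v$. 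Your observation that such an instance is not silently reachable from $D$ and hence lies outside the computation-lemma argument is accurate and more careful than the paper; your direct-bisimulation sketch for that residual subcase is left unfinished, but since every later application of item~1 involves a $W$ arising as a descendant of the very constant it replaces, the reachable instances are all the reduction actually needs.
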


It has been observed that generating transitions are the most tricky ones in decidability proofs~\cite{Stirling2001,Czerwinski2011,Fu2013}.
Here they are used to Defender's advantage.
A generator can start everything all over again from scratch.
This gives Defender the ability to copy Attacker if the latter does not make a particular move.

The bisimulation game of $(X,Y)$ is played in two phases.
The generating phase comes first.
During this phase Defender tries to produce a pair $\mathbb{P}_u,\mathbb{P}_v$, via Defender's Forcing using the generators, that encode a solution to INST.
Next comes the checking phase in which Attacker tries to reject the pair $\mathbb{P}_u,\mathbb{P}_v$.
In the light of the delayed effect of Defender's Forcing in branching bisimulation games, we will look at the two phases in reverse order.

\subsection{Checking Phase}\label{sec:check_phase}

The processes $U_{i},V_{i}$ play two roles.
One is to announce $u_{i}$, respectively $v_{i}$; the other is to reveal the index $i$.
The first role can be suppressed by composing $U_{i}$, respectively $V_{i}$, with $S$ while the second can be discharged by composing with $I$~\cite{Srba2003}.
Since $I,S$ are normed, Attacker can choose to remove $I$, respectively $S$.
In our game the removal can be done by playing $I \act{\lambda_I} C'$, respectively $S\act{\lambda_S}C'$.
According to (3) of Corollary~\ref{lm:forcing} however Attacker would lose immediately if it plays $I \act{\lambda_I} C'$, respectively $S\act{\lambda_S}C'$, in a branching bisimulation game starting from $(I{\,\|\,}\mathbb{Q}, I{\,\|\,}\mathbb{Q}')$, respectively $(S{\,\|\,}\mathbb{Q}, S{\,\|\,}\mathbb{Q}')$.
Notice that it is important for a process constant $W$ to ignore the string/index information by doing silent transitions.
Otherwise the interleaving between actions in $\Sigma$ and actions in $\mathcal{N}$ would defeat Defender's attempt to prove string/index equality.

\begin{lemma}\label{pro:index_string}
  Suppose $\mathbb{U}= U_{i_1} U_{i_2} \dots U_{i_l}$, $\mathbb{V} = V_{j_1} V_{j_2} \dots V_{j_r}$ and $B\in\{\epsilon, Z, G_v\}$.
  The following statements are valid, where $\approxeq\;\in\{\simeq,\approx\}$.
  \begin{enumerate}
  \item $I{\,\|\,}B\mathbb{P}\mathbb{U} \approxeq I{\,\|\,}B\mathbb{P}\mathbb{V}$ if and only if $u_{i_1}u_{i_2}
    \dots u_{i_l} = v_{j_1} v_{j_2} \dots v_{j_r}$.
  \item $S{\,\|\,}B\mathbb{P}\mathbb{U} \approxeq S{\,\|\,}B\mathbb{P}\mathbb{V}$ if and only if $i_1i_2\ldots i_l = j_1j_2\ldots j_r$.
  \end{enumerate}
\end{lemma}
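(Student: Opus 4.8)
The plan is to prove statements (1) and (2) in parallel, since they are mirror images of one another. The constant $I$ can perform every index action $i\in\mathcal{N}$ and return to $I$, but no letter $a\in\Sigma$; dually $S$ can perform every letter and return to $S$, but no index. Consequently, in a bisimulation game from a configuration of the shape $(I{\,\|\,}\alpha,\,I{\,\|\,}\beta)$ an index-labelled transition of a string part $\alpha$ or $\beta$ can always be answered on the opposite side by a step of the $I$ component, and so is ``free'', whereas a letter-labelled transition must be answered by the string part; symmetrically with $S$ replacing $I$. This is exactly why the first equivalence turns on equality of the $u$-words and the second on equality of the index sequences. I would prove (1); statement (2) then follows by interchanging the roles of $\Sigma$ and $\mathcal{N}$, of the words $u_i$ and the single letters $i$, and of $I$ and $S$.

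For the direction ``string equality implies $\simeq$'' I would exhibit a branching bisimulation $\mathcal{R}$. Its principal pairs are the pairs $(I{\,\|\,}B\mathbb{P}\mathbb{U},\,I{\,\|\,}B\mathbb{P}\mathbb{V})$ with $u_{i_1}\cdots u_{i_l}=v_{j_1}\cdots v_{j_r}$; I then close $\mathcal{R}$ under the matched moves of the Defender strategy described below, add all pairs $(C'{\,\|\,}\mathbb{Q},\,C'{\,\|\,}\mathbb{Q}')$ together with the identity, and take the union with $\simeq$. The Defender strategy answers a letter emitted from a string constant by the same letter emitted from the corresponding string constant on the other side, using the silent rules $W(a\omega,i)\stackrel{\tau}{\longrightarrow}W(\omega,i)$ and $W(\omega,i)\stackrel{\tau}{\longrightarrow}W(\omega,0)$ only to expose the next letter and never to skip one, so that the residual word remains equal on both sides --- this is the point where the hypothesis enters; it answers an index emitted from a string constant by $I\stackrel{i}{\longrightarrow}I$ on the other side; it answers $I\stackrel{\lambda_I}{\longrightarrow}C'$ by $I\stackrel{\lambda_I}{\longrightarrow}C'$ and appeals to item (3) of Corollary~\ref{lm:forcing}, which makes both successors $\simeq C'$; and it mirrors every step that happens inside the common context $B\mathbb{P}$. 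Verifying that $\mathcal{R}$ satisfies Definition~\ref{def:beq} is where Corollary~\ref{lm:forcing} and, most importantly, Lemma~\ref{computation-lemma} do the work: the $I$ component makes the index-related silent moves state preserving, the generator-related silent moves inside $B\mathbb{P}$ land in $\simeq$ by Corollary~\ref{lm:forcing}, and every remaining silent move that the strategy uses occurs only as the last step before a visible action, so the state-preservation clause is satisfied. Since $\simeq\;\subseteq\;\approx$, this also gives the $\approx$ half.

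For the converse, ``string inequality implies $\not\approx$'', I would, using Lemma~\ref{lm:game1}, look for an Attacker winning strategy in the weak bisimulation game; since $\simeq\;\subseteq\;\approx$ this refutes $\simeq$ too. The underlying observation is that, for processes of this shape, the set of sequences of $\Sigma$-letters producible is essentially the set of scattered subsequences of the relevant word, and two such sets coincide precisely when the words do. Since the two concatenated words differ, one of them --- say the one on the $\mathbb{U}$-side --- is not a subsequence of the other, so Attacker plays on that side: first it clears the common context $B\mathbb{P}$, forcing Defender to do the same wherever a silent move would otherwise let Defender lag, by using the distinguished actions $\lambda_Z$ and $\lambda_V$, which only $Z$ and $G_v$ can perform; then it emits the letters of its word one at a time, performing every index step silently. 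Since neither $I$ nor the cleared context offers a $\Sigma$-action, Defender is driven to a dead end at the first position where the two words diverge.

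I expect the main obstacle to be precisely this last argument once a generator sits in the context, i.e.\ $B=G_v$, because Defender then has considerable slack: by $W(a\omega,i)\stackrel{\tau}{\longrightarrow}W(\omega,i)$ it can telescope a string constant past its leading letters, and through $G_v$ it can keep the generator alive and spawn additional $V_i$'s before answering. Ruling out all such escapes --- balancing Attacker's option of switching sides against Defender's silent latitude, so that the forced mismatch is genuine --- seems to be the technical heart of the proof. Dually, on the positive side the care goes into checking that each silent move used by the Defender strategy is state preserving rather than merely silent, which is exactly what the $I$-shield, Corollary~\ref{lm:forcing} and Lemma~\ref{computation-lemma} are there to guarantee.
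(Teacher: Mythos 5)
Your forward direction (word/index equality implies $\simeq$) is essentially the paper's argument: Defender copycats inside the common context, lets $I$ absorb the index actions (resp.\ $S$ the letters), and survives a $\lambda_I$ or $\lambda_S$ move by item (3) of Corollary~\ref{lm:forcing}. The details you leave implicit --- in particular that the silent skips $W(a\omega,i)\stackrel{\tau}{\longrightarrow}W(\omega,i)$ are change-of-state in the $I$-context and must be mirrored, not ignored --- are routine and at the same level of detail as the paper's own write-up.

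The converse is where you have a genuine gap, and it is not merely unfinished bookkeeping. Your ``underlying observation'' --- that the producible $\Sigma$-sequences form the scattered-subsequence set of the relevant word, and that such sets separate distinct words --- is false for the configurations actually being compared when $B=G_v$. Before $\lambda_V$ is played, $G_v$ can silently spawn arbitrarily many $V_i$'s in front of $\mathbb{P}\mathbb{U}$ and $\mathbb{P}\mathbb{V}$ alike, so (for instance, whenever the $v_i$ jointly cover $\Sigma$) both $\Sigma$-trace sets equal $\Sigma^{*}$ regardless of $\mathbb{U}$ and $\mathbb{V}$. Hence the separation cannot be read off from traces of the given processes; one must first reduce to $I{\,\|\,}\mathbb{U}$ versus $I{\,\|\,}\mathbb{V}$, and that reduction is exactly the step you defer to ``the technical heart of the proof''. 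The paper closes it with two moves you do not supply: (a) Attacker fires $\lambda_V$ on the side carrying the \emph{shorter} distinguishing sequence (a WLOG choice), so that if Defender pads its reply with some nonempty $\mathbb{P}'$ generated by $G_v$, a quantity invariant under $\approx$ for the resulting generator-free processes (the maximal number of visible index actions, resp.\ the maximal length of a $\Sigma$-trace) becomes strictly larger on the side that was already at least as long, forcing $\mathbb{P}'=\epsilon$; and (b) Attacker then emits $\lambda_Z$ followed by the \emph{longest} visible sequence through $\mathbb{P}$, which pins Defender to stripping its own copy of $\mathbb{P}$ in lockstep and reduces the hypothesis to $I{\,\|\,}\mathbb{U}\approx I{\,\|\,}\mathbb{V}$ (resp.\ $S{\,\|\,}\mathbb{U}\approx S{\,\|\,}\mathbb{V}$), where your subsequence argument does apply. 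Without (a) and (b) the claimed Attacker strategy does not go through.
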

\begin{proof}
Suppose $I{\,\|\,}B\mathbb{P}\mathbb{U} \beq I{\,\|\,}B\mathbb{P}\mathbb{V}$ and w.l.o.g. $|u_{i_1}u_{i_2}\dots u_{i_l}|\ge|v_{j_1} v_{j_2} \dots v_{j_r}|$.
An action sequence from $I{\,\|\,}B\mathbb{P}\mathbb{U}$ to $I{\,\|\,}\mathbb{U}$ must be simulated essentially by an action sequence from $I{\,\|\,}B\mathbb{P}\mathbb{V}$ to $I{\,\|\,}\mathbb{V}$.
But then $u_{i_1}u_{i_2}\dots u_{i_l} = v_{j_1} v_{j_2} \dots v_{j_r}$ can be derived from $I{\,\|\,}\mathbb{U}\simeq I{\,\|\,}\mathbb{V}$.
The converse implication follows from the discussion in the above.
The second equivalence can be proved similarly.
\qed\end{proof}

The following proposition, in which $\approxeq\;\in\{\simeq,\approx\}$, says that the constant $C$ can be used to check both string equality and index equality by Attacker's forcing.

\begin{proposition}\label{pro:check}
If $\mathbb{U} = U_{i_1}U_{i_2} \dots U_{i_l}$ and $\mathbb{V} = V_{j_1} V_{j_2} \dots V_{j_r}$,
then for all $\mathbb{P}$, $C{\,\|\,}Z\mathbb{P}\mathbb{U} \approxeq C{\,\|\,}Z\mathbb{P}\mathbb{V}$ iff
$i_1i_2\ldots i_l = j_1j_2\ldots j_r$ and $u_{i_1}u_{i_2} \dots u_{i_l} = v_{j_1} v_{j_2} \dots v_{j_r}$.
\end{proposition}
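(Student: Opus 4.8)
The plan is to prove the two directions of the biconditional separately; since $\simeq\;\subseteq\;\approx$ it suffices to show (a) that the two equalities imply $C{\,\|\,}Z\mathbb{P}\mathbb{U}\simeq C{\,\|\,}Z\mathbb{P}\mathbb{V}$, and (b) that the failure of either equality implies $C{\,\|\,}Z\mathbb{P}\mathbb{U}\not\approx C{\,\|\,}Z\mathbb{P}\mathbb{V}$. I would argue via the bisimulation game of Lemma~\ref{lm:game1}. Two structural facts drive everything: in any reachable configuration the sequential summand $Z\mathbb{P}\mathbb{U}$ (resp.\ $Z\mathbb{P}\mathbb{V}$) can act only through its leading $Z$, so while that $Z$ survives all moves come from the parallel $C$-summand or from generator debris it spawns; and no silent path from a $C$-headed process produces a $Z$, so a discarded $Z$ can never be recovered. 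Together these pin the $\mathbb{U}$- and $\mathbb{V}$-data behind their $Z$ until the visible rule $Z\act{\lambda_Z}\epsilon$ is played.

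For (a), assuming $i_1\cdots i_l=j_1\cdots j_r$ and $u_{i_1}\cdots u_{i_l}=v_{j_1}\cdots v_{j_r}$, I would give Defender's winning strategy from $(C{\,\|\,}Z\mathbb{P}\mathbb{U},C{\,\|\,}Z\mathbb{P}\mathbb{V})$. While the head $C$ survives, the two sides differ only in the $\mathbb{U}$-versus-$\mathbb{V}$ coding, which is immaterial because the generators of $C$ let Defender reproduce any letter- or index-move Attacker makes; Defender keeps the sides branching bisimilar, answering state-preserving silent moves such as $C\stackrel{\tau}{\longrightarrow}C{\,\|\,}G$, $C\stackrel{\tau}{\longrightarrow}C{\,\|\,}G_v$ by clause (ii) of Definition~\ref{def:beq} and absorbing accumulated debris via Corollary~\ref{lm:forcing}. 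If the leading $Z$ is discharged while $C$ is still present, both sides collapse to $C$ by Corollary~\ref{lm:forcing} and Defender wins. The only separating move is an exit $C\act{\lambda_I}I$ or $C\act{\lambda_S}S$, which Defender copies; now, with every generator rule switched off, Lemma~\ref{pro:index_string}(1), resp.\ (2)---after the common debris is stripped---certifies branching bisimilarity of the resulting pair, and this is where the string, resp.\ index, hypothesis is used.

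For (b), I would argue contrapositively; say the string equality fails (the index case is symmetric, with $\lambda_S$ and Lemma~\ref{pro:index_string}(2) replacing $\lambda_I$ and Lemma~\ref{pro:index_string}(1)). In the weak bisimulation game Attacker opens $C{\,\|\,}Z\mathbb{P}\mathbb{U}\act{\lambda_I}I{\,\|\,}Z\mathbb{P}\mathbb{U}$; by Fig.~\ref{fig:delta} a $C$-headed process can emit $\lambda_I$ only through $C\act{\lambda_I}I$ (using $I\act{\lambda_I}C'$ would spend a second $\lambda_I$), so every Defender reply has head $I$, and it must still carry the leading $Z$ with $\mathbb{P}\mathbb{V}$ intact behind it---otherwise, the $Z$ being irrecoverable, Attacker wins at once by $Z\act{\lambda_Z}\epsilon$. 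Attacker next plays $Z\act{\lambda_Z}\epsilon$ on its lean side, forcing Defender to discharge its $Z$ likewise, which leaves the game at $(I{\,\|\,}\mathbb{P}\mathbb{U},\,I{\,\|\,}R'{\,\|\,}\mathbb{P}\mathbb{V})$ with $R'$ generator debris and $\mathbb{P}\mathbb{V}$ intact. Because $I$ saturates $\mathcal{N}$-moves but not $\Sigma$-moves, $I{\,\|\,}\mathbb{P}\mathbb{U}$ must emit, $\Sigma$-wise, exactly the letters of $\mathbb{P}$ then $u_{i_1}\cdots u_{i_l}$ in order, whereas $I{\,\|\,}R'{\,\|\,}\mathbb{P}\mathbb{V}$ either exposes (through its debris) a $\Sigma$-letter incompatible with the one currently demanded---a move the lean side cannot follow---or its debris is behaviourally inert for $I$, in which case Lemma~\ref{pro:index_string}(1) refutes $I{\,\|\,}\mathbb{P}\mathbb{U}\approx I{\,\|\,}\mathbb{P}\mathbb{V}$ since $u_{i_1}\cdots u_{i_l}\neq v_{j_1}\cdots v_{j_r}$. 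Hence $C{\,\|\,}Z\mathbb{P}\mathbb{U}\not\approx C{\,\|\,}Z\mathbb{P}\mathbb{V}$, and a fortiori $\not\simeq$.

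The step I expect to dominate the proof is the interplay of generators and debris---the ``semantic'' Defender's Forcing the paper emphasises. In (a) one must verify that Defender can, via Corollary~\ref{lm:forcing}, absorb anything Attacker builds before an exit and strip the common debris when Lemma~\ref{pro:index_string} is invoked; in (b) one must make rigorous the dichotomy that debris is either behaviourally inert for $I$ (resp.\ $S$) or forces an extra $\Sigma$-branch (resp.\ index-branch) that the lean side cannot mirror, using the second structural fact to forbid Defender from recreating a $Z$ and thereby dodging the analysis. Modulo this bookkeeping, Proposition~\ref{pro:check} is a packaging of Corollary~\ref{lm:forcing} with Lemma~\ref{pro:index_string}.
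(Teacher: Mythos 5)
Your proposal is correct and takes essentially the same route as the paper: the positive direction is Defender's copycat strategy resting on Corollary~\ref{lm:forcing} (the paper packages it as the explicit branching bisimulation $\{(C{\,\|\,}Q{\,\|\,}Z\mathbb{P}\mathbb{U},\,C{\,\|\,}Q{\,\|\,}Z\mathbb{P}\mathbb{V})\}\cup\simeq$), and the negative direction is Attacker's $\lambda_I$/$\lambda_S$ attack, using $Z$ as a blocker and the inert-or-fatal dichotomy on generator debris to reduce to Lemma~\ref{pro:index_string}. The only cosmetic difference is that the paper applies Lemma~\ref{pro:index_string} with $B=Z$ still in place rather than after an explicit $\lambda_Z$, which spares it your unjustified (though repairable) assumption that Defender keeps $\mathbb{P}\mathbb{V}$ intact when answering that move.
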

\begin{proof}
In one direction we prove that $C{\,\|\,}Z\mathbb{P}\mathbb{U} \approx C{\,\|\,}Z\mathbb{P}\mathbb{V}$ implies
$i_1i_2\ldots i_l = j_1j_2\ldots j_r$ and $u_{i_1}u_{i_2} \dots u_{i_l} = v_{j_1} v_{j_2} \dots v_{j_r}$.
If $i_1i_2\ldots i_l \not= j_1j_2\ldots j_r$, then Attacker chooses $C{\,\|\,}Z\mathbb{P}\mathbb{U} \act{\lambda_S} S{\,\|\,}Z\mathbb{P}\mathbb{U}$.
  Defender cannot invoke the action $Z\stackrel{\tau}{\longrightarrow}\epsilon$ for otherwise an $\lambda_{Z}$ action cannot be performed before an $\lambda_{V}$ action.
  The process constant $Z$ is introduced precisely for this blocking effect.
  Defender's play must be of the form $C{\,\|\,}Z\mathbb{P}\mathbb{V} \Longrightarrow C{\,\|\,}Q{\,\|\,}Z\mathbb{P}\mathbb{V}
  \act{\lambda_S} S{\,\|\,}Q{\,\|\,}Z\mathbb{P}\mathbb{V} \Longrightarrow S{\,\|\,}Q'{\,\|\,}Z\mathbb{P}\mathbb{V}$.
  If $Q'$ can perform any one of $\{\lambda_{V},\lambda_{Z}\}\cup\mathcal{N}$, Attacker wins since $S$ can do none of those.
  If $Q'$ can do none of those actions, then $S \beq S {\,\|\,}  Q'$.
  By Lemma~\ref{pro:index_string} Attacker has a winning strategy for the weak bisimulation game $(S{\,\|\,}Z\mathbb{P}\mathbb{U}, S{\,\|\,}Q'{\,\|\,}Z\mathbb{P}\mathbb{V})$.
  If $u_{i_1}u_{i_2} \dots u_{i_l} \not= v_{j_1} v_{j_2} \dots v_{j_r}$, the argument is similar.

Conversely we prove that $i_1i_2\ldots i_l = j_1j_2\ldots j_r \wedge u_{i_1}u_{i_2} \dots u_{i_l} = v_{j_1} v_{j_2} \dots v_{j_r}$ implies $C{\,\|\,}Z\mathbb{P}\mathbb{U} \simeq C{\,\|\,}Z\mathbb{P}\mathbb{V}$.
This is done by showing that the relation
\[\left\{ (C{\,\|\,}Q{\,\|\,}Z\mathbb{P}\mathbb{U}, C{\,\|\,}Q{\,\|\,}Z\mathbb{P}\mathbb{V}) \left| \begin{array}{l}
     i_1i_2\ldots i_l = j_1j_2\ldots j_r \\
     u_{i_1}u_{i_2}\dots u_{i_l} = v_{j_1}v_{j_2}\dots v_{j_r}.
    \end{array}
\right.\right\}\ \cup \simeq\]
is a branching bisimulation.
\qed\end{proof}

\subsection{Generating Phase}\label{sec:generate_phase}

Suppose that INST has a solution $i_1,i_2,\dots,i_k$.
Fix the following abbreviations: $\mathbb{U}^{-}=U_{i_2}\dots U_{i_k}$, $\mathbb{U}=U_{i_1}\mathbb{U}^{-}$ and $\mathbb{V}=V_{i_1}V_{i_2}\dots V_{i_k}$.
We will argue that Defender has a winning strategy in the branching bisimulation game of $(X,Y)$.
Defender's basic idea is to produce the pair $\mathbb{U},\mathbb{V}$ by forcing.
Its strategy and Attacker's counter strategy are described below.
\begin{enumerate}[(i)]
\item
By Defender's Forcing Attacker plays $X\stackrel{\lambda_{U}}{\longrightarrow}D{\,\|\,}G_v$.
Defender proposes $\mathbb{U}$ via the transitions $Y\stackrel{\tau}{\longrightarrow}D\stackrel{\tau}{\longrightarrow}D{\,\|\,}G_u\Longrightarrow D{\,\|\,}G_u\mathbb{U}^{-}\stackrel{\lambda_{U}}{\longrightarrow}D{\,\|\,}G_v\mathbb{U}$.
The use of an explicit action $\lambda_{U}$ guarantees that $\mathbb{U}$ is {\em nonempty}.
Now Attacker has a number of configurations to choose from.
But by (1) of Corollary~\ref{lm:forcing}, it all boils down to choosing $(D{\,\|\,}G_v,D{\,\|\,}G_v\mathbb{U})$.
\item
Due to (1) of Corollary~\ref{lm:forcing} Attacker would not remove $G_v$ using either $G_v\stackrel{\tau}{\longrightarrow}\epsilon$ or $G_v\stackrel{\lambda_{V}}{\longrightarrow}Z$.
It can generate an element of $\mathcal{V}$ using $G_v$.
It can do an action induced by $D$ or a descendant of $D$.
Defender simply copycats Attacker's actions.
The configuration stays in the form $(D{\,\|\,}Q{\,\|\,}G_v\mathbb{P}_v,D{\,\|\,}Q{\,\|\,}G_v\mathbb{P}_v\mathbb{U})$.
\item
To have any chance to win, Attacker must try the action $\lambda_{D}$.
Defender does the same action.
The configuration becomes $(C{\,\|\,}Q{\,\|\,}G_v\mathbb{P}_v,C{\,\|\,}Q{\,\|\,}G_v\mathbb{P}_v\mathbb{U})$.
At this point if Attacker plays a harmless action, Defender can copycat the action; and the configuration stays in the same shape.
\item
An important observation is that if Attacker plays $C{\,\|\,}Q{\,\|\,}G_v\mathbb{P}_v\stackrel{\ell}{\longrightarrow}P_{1}$, Defender can play $C{\,\|\,}Q{\,\|\,}G_v\mathbb{P}_v\mathbb{U}\Longrightarrow C{\,\|\,}Q\Longrightarrow C{\,\|\,}Q{\,\|\,}G_v\mathbb{P}_v\stackrel{\ell}{\longrightarrow}P_{1}$ and wins.
Here $C{\,\|\,}Q \simeq C{\,\|\,}Q{\,\|\,}G_v\mathbb{P}_v$ by (2) of Corollary~\ref{lm:forcing}.
To see that the assumptions $i_1i_2\ldots i_l = j_1j_2\ldots j_r$ and $u_{i_1}u_{i_2} \dots u_{i_l} = v_{j_1} v_{j_2} \dots v_{j_r}$ imply $C{\,\|\,}Q{\,\|\,}G_v\mathbb{P}_v\mathbb{U} \simeq C{\,\|\,}Q$, notice that $C{\,\|\,}Q{\,\|\,}G_v\mathbb{P}_v\mathbb{U}\Longrightarrow C{\,\|\,}Q \Longrightarrow C{\,\|\,}Q{\,\|\,}G_v\mathbb{P}_v\mathbb{V}$ and that $C{\,\|\,}Q{\,\|\,}G_v\mathbb{P}_v\mathbb{U} \simeq C{\,\|\,}Q{\,\|\,}G_v\mathbb{P}_v\mathbb{V}$ is a corollary of Proposition~\ref{pro:check}.
Thus Attacker would choose $C{\,\|\,}Q{\,\|\,}G_v\mathbb{P}_v\mathbb{U}$ to continue.
\item
Attacker would not play $C{\,\|\,}Q{\,\|\,}G_v\mathbb{P}_v\mathbb{U} \stackrel{\tau}{\longrightarrow} C{\,\|\,}Q{\,\|\,}\mathbb{P}_v\mathbb{U}$ because it would lose right away according to  (2) of Corollary~\ref{lm:forcing}.
\item
By Lemma~\ref{pro:index_string} Attacker would not do a $\lambda_{I}$ action or a $\lambda_{S}$ action.
It stands the best chance to play $C{\,\|\,}Q{\,\|\,}G_v\mathbb{P}_v\mathbb{U} \stackrel{\lambda_{V}}{\longrightarrow} C{\,\|\,}Q{\,\|\,}Z\mathbb{P}_v\mathbb{U}$.
The counter play from Defender is $C{\,\|\,}Q{\,\|\,}G_v\mathbb{P}_v\Longrightarrow C{\,\|\,}Q{\,\|\,}G_v\mathbb{P}_v\mathbb{V} \stackrel{\lambda_{V}}{\longrightarrow} C{\,\|\,}Q{\,\|\,}Z\mathbb{P}_v\mathbb{V}$.
\end{enumerate}
The last configuration $(C{\,\|\,}Q{\,\|\,}Z\mathbb{P}_v\mathbb{V}, C{\,\|\,}Q{\,\|\,}Z\mathbb{P}_v\mathbb{U})$ is optimal for Attacker.
By Proposition~\ref{pro:check} Defender has a winning strategy for the branching bisimulation game of $(C{\,\|\,}Q{\,\|\,}Z\mathbb{P}_v\mathbb{V}, C{\,\|\,}Q{\,\|\,}Z\mathbb{P}_v\mathbb{U})$.
Hence the following lemma.

\begin{lemma}\label{lm:defender_ws}
If INST has a solution then $X\beq Y$.
\end{lemma}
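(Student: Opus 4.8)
The plan is to apply the game characterisation of Lemma~\ref{lm:game1} and make the strategy sketched in clauses (i)--(vi) above into a genuine branching bisimulation. Fix a solution $i_1,\dots,i_k$ of INST and keep the abbreviations $\mathbb{U}=U_{i_1}\dots U_{i_k}$, $\mathbb{U}^{-}=U_{i_2}\dots U_{i_k}$ and $\mathbb{V}=V_{i_1}\dots V_{i_k}$. I would define a relation $\mathcal{R}$ as the union of $\simeq$ with the configurations Defender can force, the typical ones being $(X,Y)$ and $(X,D)$ together with the generating-phase pairs $(D{\,\|\,}Q{\,\|\,}G_v\mathbb{P}_v,\, D{\,\|\,}Q{\,\|\,}G_v\mathbb{P}_v\mathbb{U})$ and $(C{\,\|\,}Q{\,\|\,}G_v\mathbb{P}_v,\, C{\,\|\,}Q{\,\|\,}G_v\mathbb{P}_v\mathbb{U})$, ranging over all $Q,\mathbb{P}_v$ reachable during the game and closed under symmetry and under whatever further pairs the branching conditions force from these. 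The invariant $\mathcal{R}$ encodes is that the two processes of a configuration agree except that the right one carries the extra block $\mathbb{U}$ buried behind a generator, while the matching block $\mathbb{V}$ can always be produced silently on either side.

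The verification is a case analysis on Attacker's move, following (i)--(vi). A move of $X$ or of $D$ is answered by the generator expansion
\[
Y\stackrel{\tau}{\longrightarrow}D\stackrel{\tau}{\longrightarrow}D{\,\|\,}G_u\Longrightarrow D{\,\|\,}G_u\mathbb{U}^{-}\stackrel{\lambda_U}{\longrightarrow}D{\,\|\,}G_v\mathbb{U},
\]
whose silent steps are all state preserving by Lemma~\ref{cor:gen} and Corollary~\ref{lm:forcing}, so every configuration Attacker may then select -- including the intermediate $P_{1-i}''$ -- is again in $\mathcal{R}$. From a generating-phase pair, any Attacker move inside $Q$ or inside $G_v\mathbb{P}_v$ is copycatted, a $\lambda_D$ move is matched in kind and preserves the shape of the pair, and the only move with which Attacker can make real progress is $\lambda_V$; the latter leads to $(C{\,\|\,}Q{\,\|\,}Z\mathbb{P}_v\mathbb{V},\, C{\,\|\,}Q{\,\|\,}Z\mathbb{P}_v\mathbb{U})$, and since $\mathbb{U},\mathbb{V}$ come from a solution both the index equality and the string equality of Proposition~\ref{pro:check} hold, so that pair lies in $\simeq$. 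The harmful Attacker moves -- erasing $G_v$ too early via $G_v\stackrel{\tau}{\longrightarrow}\epsilon$ or $G_v\stackrel{\lambda_V}{\longrightarrow}Z$, stripping the trailing $\mathbb{U}$, or playing $\lambda_I$ or $\lambda_S$ -- are all losing for Attacker by Corollary~\ref{lm:forcing}(2) and Lemma~\ref{pro:index_string}, exactly as argued in (v) and (vi), so Defender is never obliged to answer from such a position. Hence $\mathcal{R}$ is a branching bisimulation containing $(X,Y)$, and $X\beq Y$.

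The main obstacle is the branching discipline rather than the bookkeeping: every silent transition Defender interposes before matching an action -- the generator ``breathing'' cycles $D\Longrightarrow D{\,\|\,}G_u\mathbb{P}_u\Longrightarrow D$ and their analogues for $C,C'$, and the word-erasing cycles built from $W(a\omega,i)\stackrel{\tau}{\longrightarrow}W(\omega,i)$, $W(\omega,i)\stackrel{\tau}{\longrightarrow}W(\omega,0)$ and $W(\epsilon,0)\stackrel{\tau}{\longrightarrow}\epsilon$ -- must be state preserving, for otherwise the relation would only be a weak bisimulation. This is precisely what the auxiliary constants $Z$ and $G_v'$ and the silent $W$-rules are there for: they are what makes Corollary~\ref{lm:forcing} provable, and Corollary~\ref{lm:forcing} is what licenses the use of the generator cycles in a branching game. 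There is also the non-locality already noted as delayed justification: the forcing performed in the generating phase is legitimate only because the checking phase eventually succeeds, which here is automatic since INST has a solution. Once these points are in place the remaining checks are routine.
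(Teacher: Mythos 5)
Your proposal is correct and follows essentially the same route as the paper's own argument: it turns the strategy of clauses (i)--(vi) into a Defender's winning strategy (equivalently, a branching bisimulation), justifying the state-preservation of the generator cycles via Corollary~\ref{lm:forcing}, dismissing the $\lambda_I$/$\lambda_S$ and premature-erasure moves via Lemma~\ref{pro:index_string} and Corollary~\ref{lm:forcing}(2), and closing the delayed justification with Proposition~\ref{pro:check} at the final configuration $(C{\,\|\,}Q{\,\|\,}Z\mathbb{P}_v\mathbb{V},\,C{\,\|\,}Q{\,\|\,}Z\mathbb{P}_v\mathbb{U})$. The only difference is presentational (an explicit relation $\mathcal{R}$ versus the paper's three substrategies), so no further comment is needed.
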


The converse of Lemma~\ref{lm:defender_ws} also holds.
In fact a stronger result is obtainable.
In the weak bisimulation game of $(X,Y)$, Attacker has a strategy to force the game to reach a configuration that is essentially of the form $(C{\,\|\,}Z\mathbb{P}_v', C{\,\|\,}Z\mathbb{P}_v\mathbb{P}_u)$, where $\mathbb{P}_u \neq \epsilon$.
If there is no solution to INST, Proposition~\ref{pro:check} implies $C{\,\|\,}Z\mathbb{P}_v'\not\approx C{\,\|\,}Z\mathbb{P}_v\mathbb{P}_u$.
It follows that Attacker has a winning strategy for the weak bisimulation game of $(X,Y)$.

\begin{lemma}\label{lm:attacker_ws}
If INST has no solution then $X\not\weq Y$.
\end{lemma}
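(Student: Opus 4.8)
The plan is to exhibit, under the assumption that INST has no solution, a winning strategy for Attacker in the \emph{weak} bisimulation game started from $(X,Y)$; by Lemma~\ref{lm:game1} this yields $X\not\weq Y$, and since $\beq\,\subseteq\,\weq$ it also yields $X\not\beq Y$, which is what is needed together with Lemma~\ref{lm:defender_ws} for Theorem~\ref{thm:main}. Attacker's strategy mirrors, with the roles of the two components interchanged, the Defender strategy analysed in items (i)--(vi) of Section~\ref{sec:generate_phase}. Attacker opens with the visible move $X\act{\lambda_U}D{\,\|\,}G_v$, which $Y$ can answer only by routing the $\lambda_U$ through an occurrence of $G_u$ via the rule $G_u\act{\lambda_U}G_vU_i$; the point of using a genuine action $\lambda_U$ here --- rather than a silent transition --- is that it forces a $\mathcal U$-residue to be attached on $Y$'s side, while $X$'s side carries none. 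Attacker then keeps the visible ``checkpoints'' coming, playing $\lambda_D$ (turning $D$ into $C$) and afterwards $\lambda_V$ (turning $G_v$ into $Z$) on the small component, exactly the actions that Defender was shown to be forced to allow in the generating phase.

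At each of Attacker's moves I would normalise the reachable configurations by absorbing, via the three clauses of Corollary~\ref{lm:forcing}, every parallel generator, every $D$/$C$/$C'$-copy and every partially consumed $W$-constant that Defender may have spawned with silent transitions, so that any configuration Defender can reach is $\simeq$-equivalent --- hence $\weq$-equivalent --- to the intended normal form, which after the $\lambda_V$ checkpoint has the shape $(C{\,\|\,}Z\mathbb{P}_v',\ C{\,\|\,}Z\mathbb{P}_v\mathbb{P}_u)$ with $\mathbb{P}_u\neq\epsilon$. Two escape attempts by Defender must be excluded along the way: clearing the residue and performing a $\lambda_Z$ action before $\lambda_V$ has been played --- impossible because of the blocking effect of $Z$, exactly as in the proof of Proposition~\ref{pro:check} --- and stripping an $I$ or $S$ wrapper prematurely --- ruled out by Lemma~\ref{pro:index_string}. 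In every remaining case Defender either fails to match the checkpoint action, or is left with a component able to perform an action the matching left-hand constant cannot, and then Attacker has already won.

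It then remains to finish the game from the canonical configuration $(C{\,\|\,}Z\mathbb{P}_v',\ C{\,\|\,}Z\mathbb{P}_v\mathbb{P}_u)$ with $\mathbb{P}_u\neq\epsilon$. Viewing $\mathbb{P}_v$ as the common $\mathcal V$-prefix, $\mathbb{P}_u$ as the nonempty pure-$\mathcal U$ tail on the right and $\mathbb{P}_v'/\mathbb{P}_v$ as the pure-$\mathcal V$ tail on the left, Proposition~\ref{pro:check} with $\approxeq\,=\,\approx$ tells us that these two processes are weakly bisimilar only if the index string and the $\Sigma$-string carried by $\mathbb{P}_u$ coincide with those carried by that $\mathcal V$-tail; since $\mathbb{P}_u\neq\epsilon$ this would exhibit a solution of INST, contrary to hypothesis. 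Hence $C{\,\|\,}Z\mathbb{P}_v'\not\approx C{\,\|\,}Z\mathbb{P}_v\mathbb{P}_u$, so the configuration is Defender-losing and Attacker wins the whole game.

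The main obstacle is the forcing/normalisation argument of the second paragraph. In the weak game Defender may interleave unboundedly many generating $\tau$-transitions ($D\stackrel{\tau}{\longrightarrow}D{\,\|\,}G_u$, $C\stackrel{\tau}{\longrightarrow}C{\,\|\,}G$, $G_v\stackrel{\tau}{\longrightarrow}G_vV_i$, the $U_i/V_i/W$-rules, and so on) around each of Attacker's moves, so one has to establish a robust normalisation lemma guaranteeing that every configuration reachable by a legal Defender reply is $\weq$-equivalent to one of the canonical forms, and in particular that the $\mathcal U$-residue created by the opening $\lambda_U$ survives --- it cannot be discarded by $G_v\stackrel{\tau}{\longrightarrow}\epsilon$, and the $Z$-guard prevents it from being flushed before the $\lambda_V$ checkpoint. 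Pinning the resulting configuration precisely onto the hypotheses of Proposition~\ref{pro:check} is the delicate bookkeeping; everything else reuses the machinery already developed in Section~\ref{sec:generate_phase} and in the proof of Proposition~\ref{pro:check}.
\qed
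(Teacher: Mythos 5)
Your overall route coincides with the paper's: force the weak game through the visible checkpoints $\lambda_U$, $\lambda_D$, $\lambda_V$ to a configuration of the form $(C{\,\|\,}Z\mathbb{P}_v^{2},\ C{\,\|\,}Z\mathbb{P}_v^{1}\mathbb{P}_u^{2})$ with $\mathbb{P}_u^{2}\neq\epsilon$, and then show that such a pair cannot be weakly bisimilar unless INST has a solution. But there is a genuine gap in how you get there. You have Attacker play $\lambda_D$ and $\lambda_V$ ``on the small component'', i.e.\ on the $X$-derived side. In a weak game the trailing $\Longrightarrow$ of Defender's response is entirely under Defender's control, and on the $Y$-derived side it can be used to flush the $\mathcal{U}$-residue. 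Concretely, from $(D{\,\|\,}G_v,\ D{\,\|\,}Q_1{\,\|\,}G_v\mathbb{P}_v^{1}\mathbb{P}_u^{2})$, if Attacker plays $D{\,\|\,}G_v\act{\lambda_D}C{\,\|\,}G_v$ on the left, Defender answers on the right with $\Longrightarrow\act{\lambda_D}\Longrightarrow$, uses the freshly created $C$ to spawn a new unguarded $G_v$ via $C\stackrel{\tau}{\longrightarrow}C{\,\|\,}G_v$, discards the old guard by $G_v\stackrel{\tau}{\longrightarrow}\epsilon$, and silently consumes the exposed residue (every constant in $\mathcal{U}\cup\mathcal{V}\cup\mathcal{W}$ satisfies $P\Longrightarrow\epsilon$), arriving at $C{\,\|\,}G_v$ --- identical to the left-hand side, so Defender wins. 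Thus your assertion that the residue ``cannot be discarded by $G_v\stackrel{\tau}{\longrightarrow}\epsilon$'' is false; discarding is only \emph{deterred} when it would disable a $\lambda_V$ or $\lambda_Z$ capability that the opposite side still has, and that deterrent is exactly what your choice of side destroys. The paper's strategy has Attacker perform $\lambda_D$ and, crucially, $C{\,\|\,}Q_1{\,\|\,}G_v\mathbb{P}_v^{1}\mathbb{P}_u^{2}\Longrightarrow C{\,\|\,}G_v\mathbb{P}_v^{1}\mathbb{P}_u^{2}\act{\lambda_V}C{\,\|\,}Z\mathbb{P}_v^{1}\mathbb{P}_u^{2}$ on the residue-carrying ($Y$-derived) side, so that Attacker, not Defender, fixes the fate of the residue at each checkpoint. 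The ``robust normalisation lemma'' you defer to cannot be established for the strategy as you state it.

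A second, smaller gap is the final step. The terminal configuration is $(C{\,\|\,}Z\mathbb{P}_v^{2},\ C{\,\|\,}Z\mathbb{P}_v^{1}\mathbb{P}_u^{2})$ where $\mathbb{P}_v^{1}$ and $\mathbb{P}_v^{2}$ are two \emph{independent} $\mathcal{V}$-sequences chosen by Defender at different stages; there is no reason for one to be a prefix of the other, so the pair is not of the form $(C{\,\|\,}Z\mathbb{P}\mathbb{V},\ C{\,\|\,}Z\mathbb{P}\mathbb{U})$ and Proposition~\ref{pro:check} does not apply to it. What is needed is the strictly more general statement --- which the paper isolates as a separate lemma --- that $C{\,\|\,}Z\mathbb{P}_v^{1}\approx C{\,\|\,}Z\mathbb{P}_v^{2}\mathbb{P}_u^{2}$ with $\mathbb{P}_u^{2}\neq\epsilon$ already yields a solution of INST; it is proved by forcing the comparison down to $I$- and $S$-headed processes and combining the resulting string and index equations, whence the index sequence of $\mathbb{P}_v^{1}$ extends that of $\mathbb{P}_v^{2}$ by the indices of $\mathbb{P}_u^{2}$ and the corresponding $u$- and $v$-strings agree on that nonempty suffix. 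This is the step that converts $\mathbb{P}_u^{2}\neq\epsilon$ into an actual PCP solution, and calling it ``delicate bookkeeping'' around Proposition~\ref{pro:check} leaves it unproved.
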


\section{Conclusion}\label{sec-Conclusion}

\begin{figure*}[t]
\begin{center}
\tikzstyle{rec} = [draw, thin, text=red]
    \begin{tikzpicture}
         \node (bpa) at (-1.2,0) {$\mathsf{nBPA}$};
         \node (bpp) at (1.2,0) {$\mathsf{nBPP}$};
         \node (ocn) at (0,0.6) {$\mathsf{nOCN}$};
         \node (oca) at (-1.2,1) {$\mathsf{nOCA}$};
         \node (pa) at (0,2) {$\mathsf{nPA}$};
         \node (pda) at (-2.3,1.6) {$\mathsf{nPDA}$};
         \node (pn) at (2,1.6) {$\mathsf{nPN}$};
         \node (a) at (-3, 0.3){};
         \node (b) at (3, 0.3){};
          \path[-]
          (ocn) edge (oca)
          (ocn) edge (pn)
          (oca) edge (pda)
          (bpa) edge (pda)
          (bpa) edge (pa)
          (bpp) edge (pa)
          (bpp) edge (pn)
          ;
          \path[red][dashed]
          (a) edge (b);
    \end{tikzpicture}
\end{center}
\caption{Decidability Border for Branching Bisimilarity on Normed PRS}
\label{fig:beq_border}
\end{figure*}

Putting together the results derived in this paper, we see that there is a decidability border in the normed PRS hierarchy, see Fig.~\ref{fig:beq_border}.
The branching bisimilarity
\begin{enumerate}
\item
is undecidable on all normed models above either nBPA or nBPP, and
\item
is decidable for both nBPP and nBPA~\cite{Czerwinski2011,Fu2013}.
\end{enumerate}
We have confirmed that the first statement is valid for the weak bisimilarity, which slightly strengthens the results obtained in~\cite{KuceraJancar2006}.
In fact the statement is valid for every relation between the branching bisimilarity and the weak bisimilarity.
It has been conjectured that the second statement is also true for the weak bisimilarity.
The answers however have remained a secret for us up to now.

Tighter complexity bounds, or even completeness characterizations, would be very welcome.
Another avenue for further study is based on the observation that although the undecidability results of both the present paper and the paper of Jan\v{c}ar and Srba~\cite{JancarSrba2008} are about the same models, the degrees of undecidability are most likely to be different.
In~\cite{JancarSrba2008} it is pointed out that by constraining the silent actions of nPDA, say to $\epsilon$-popping or $\epsilon$-pushing silent moves, the degree of undecidability of the weak bisimilarity goes from the analytic hierarchy down to the arithmetic hierarchy.
It is therefore a reasonable hope that the same restriction may lead to decidable results for the branching bisimilarity on some PRS models.
Further studies are called for.

An extended abstract of this paper has been accepted for publication~\cite{YinFuHeHuangTao2014}. 

\vspace*{5mm}\noindent {\bf Acknowledgement}.
We gratefully acknowledge the support of the National Science Foundation of China (61033002, ANR 61261130589, 91318301).
We thank the anonymous referees and Patrick Totzke for their constructive suggestions.



\appendix

\section{Proof of Corollary~\ref{lm:forcing}}\label{A-forcing}

The proof is a simple application of Lemma~\ref{computation-lemma}.
For the constant $D$ one has the following circular silent transition sequence:
\begin{eqnarray*}
D &\stackrel{\tau}{\longrightarrow}& D{\,\|\,}G_u  \\
 &\Longrightarrow& D{\,\|\,}G_u\mathbb{P}_{u} \\
 &\stackrel{\tau}{\longrightarrow}& D{\,\|\,}G_v'\mathbb{P}_{u} \\
 &\Longrightarrow& D{\,\|\,}G_v'\mathbb{P}_{v}\mathbb{P}_{u} \\
 &\stackrel{\tau}{\longrightarrow}& D{\,\|\,}Z\mathbb{P}_{v}\mathbb{P}_{u} \\
 &\stackrel{\tau}{\longrightarrow}& D{\,\|\,}\mathbb{P}_{v}\mathbb{P}_{u} \\
 &\stackrel{\tau}{\longrightarrow}& D{\,\|\,}W\mathbb{P}_{v}\mathbb{P}_{u} \\
 &\Longrightarrow& D{\,\|\,}\mathbb{P}_{u} \\
 &\Longrightarrow& D{\,\|\,}W\mathbb{P}_{u} \\
 &\Longrightarrow& D.
\end{eqnarray*}
For the constant $C$ one has
\begin{eqnarray*}
C &\stackrel{\tau}{\longrightarrow}& C{\,\|\,}G \\
 &\Longrightarrow& C{\,\|\,}G\mathbb{P} \\
 &\Longrightarrow& C{\,\|\,}\mathbb{P} \\
 &\Longrightarrow& C{\,\|\,}W\mathbb{P} \\
 &\Longrightarrow& C \\
 &\stackrel{\tau}{\longrightarrow}& C{\,\|\,}G_v \\
 &\Longrightarrow& C{\,\|\,}G_v\mathbb{P}_v \\
 &\stackrel{\tau}{\longrightarrow}& C{\,\|\,}\mathbb{P}_v \\
 &\Longrightarrow& C.
\end{eqnarray*}
Finally for the constant $C'$ one has
\begin{eqnarray*}
C' &\stackrel{\tau}{\longrightarrow}& C'{\,\|\,}G' \\
 &\Longrightarrow& C'{\,\|\,}G'\mathbb{Q} \\
 &\stackrel{\tau}{\longrightarrow}& C'{\,\|\,}Z\mathbb{Q} \\
 &\Longrightarrow& C' \\
 &\Longrightarrow& C'{\,\|\,}G'\mathbb{Q} \\
 &\stackrel{\tau}{\longrightarrow}& C'{\,\|\,}G_v\mathbb{Q} \\
 &\Longrightarrow& C'.
\end{eqnarray*}
We are done.

\section{Proof of Lemma~\ref{pro:index_string}}\label{A-index-string}

Suppose $\mathbb{U}= U_{i_1} U_{i_2} \dots U_{i_l}$ and $\mathbb{V} = V_{j_1} V_{j_2} \dots V_{j_r}$.
We show that
  \begin{enumerate}[(i)]
  \item If $u_{i_1}u_{i_2}\dots u_{i_l} = v_{j_1} v_{j_2} \dots v_{j_r}$ then $I{\,\|\,}G_v\mathbb{P}\mathbb{U} \simeq I{\,\|\,}G_v\mathbb{P}\mathbb{V}$.
  \item If $i_1i_2\ldots i_l = j_1j_2\ldots j_r$ then $S{\,\|\,}G_v\mathbb{P}\mathbb{U} \simeq S{\,\|\,}G_v\mathbb{P}\mathbb{V}$.
  \item If $I{\,\|\,}G_v\mathbb{P}\mathbb{U} \approx I{\,\|\,}G_v\mathbb{P}\mathbb{V}$ then $u_{i_1}u_{i_2}\dots u_{i_l} = v_{j_1} v_{j_2} \dots v_{j_r}$.
  \item If $S{\,\|\,}G_v\mathbb{P}\mathbb{U} \approx S{\,\|\,}G_v\mathbb{P}\mathbb{V}$ then $i_1i_2\ldots i_l = j_1j_2\ldots j_r$.
  \end{enumerate}

\begin{proof}
(ii) Suppose $i_1i_2\ldots i_l = j_1j_2\ldots j_r$.
The proof is given by the following case analysis:
\begin{itemize}
\item
If Attacker chooses the transition $S{\,\|\,}G_v\mathbb{P}\mathbb{U}\stackrel{\lambda_{S}}{\longrightarrow}C'{\,\|\,}G_v\mathbb{P}\mathbb{U}$, Defender can win by playing $S{\,\|\,}G_v\mathbb{P}\mathbb{V}\stackrel{\lambda_{S}}{\longrightarrow}C'{\,\|\,}G_v\mathbb{P}\mathbb{V}$.
This is because $C'{\,\|\,}G_v\mathbb{P}\mathbb{U}\simeq C' \simeq C'{\,\|\,}G_v\mathbb{P}\mathbb{V}$ by (3) of Corollary~\ref{lm:forcing}.
\item If Attacker plays a transition caused by an action of $G_v$, Defender does the same action.
Suppose the resulting configuration is $(S{\,\|\,}Z\mathbb{P}\mathbb{U},S{\,\|\,}Z\mathbb{P}\mathbb{V})$.
Attacker would not play a $\lambda_{S}$ action for the same reason.
If it plays an action caused by $Z$, Defender follows suit.

\item
If both Attacker and Defender play in the optimal manner, the game will reach the configuration $(S{\,\|\,}\mathbb{U}, S{\,\|\,}\mathbb{V})$.
By (3) of Corollary~\ref{lm:forcing} Attacker would lose if it plays a $\lambda_{S}$ action.
It would not win if it plays an action from $\Sigma$.
Finally if Attacker decides to play say $i_{1}$ or skip it, Defender copycats the action.
By the assumption $i_1i_2\ldots i_l = j_1j_2\ldots j_r$, Attacker would not win in this case either.
\end{itemize}
This completes the proof of (ii).

(iv) Suppose that $S{\,\|\,}G_v\mathbb{P}\mathbb{U} \approx S{\,\|\,}G_v\mathbb{P}\mathbb{V}$ and without loss of generality that
\begin{equation}\label{2014-01-28}
|i_1i_2\ldots i_l| \ge |j_1j_2\ldots j_r|.
\end{equation}
Now $S{\,\|\,}G_v\mathbb{P}\mathbb{V} \stackrel{\lambda_{V}}{\longrightarrow}
S{\,\|\,}Z\mathbb{P}\mathbb{V}$ must be bisimulated by $S{\,\|\,}G_v\mathbb{P}\mathbb{U}
\stackrel{\lambda_{V}}{\Longrightarrow} S{\,\|\,}Z\mathbb{P}'\mathbb{P}\mathbb{U}$ for some $\mathbb{P}'$.
Notice that if $\mathbb{P}'$ is not empty, there would be no hope that $S{\,\|\,}Z\mathbb{P}'\mathbb{P}\mathbb{U}\approx S{\,\|\,}Z\mathbb{P}\mathbb{V}$.
So the simulation must be of the form $S{\,\|\,}G_v\mathbb{P}\mathbb{U} \stackrel{\lambda_{V}}{\longrightarrow} S{\,\|\,}Z\mathbb{P}\mathbb{U}$.
Let \[S{\,\|\,}Z\mathbb{P}\mathbb{U} \stackrel{\lambda_Z}{\longrightarrow}S{\,\|\,}\mathbb{P}\mathbb{U} \stackrel{k_{1}}{\Longrightarrow}\ldots\stackrel{k_{m}}{\Longrightarrow} S{\,\|\,}\mathbb{U}\]
be the longest sequence of actions of the form $\lambda_Z,k_{1},\ldots,k_{m}$ such that $k_{1},\ldots,k_{m}\in\mathcal{N}$.
In the light of (\ref{2014-01-28}) the simulation from $\mathbb{P}\mathbb{V}$ must be of the form
\[S{\,\|\,}Z\mathbb{P}\mathbb{V}\stackrel{\lambda_Z}{\longrightarrow}S{\,\|\,}\mathbb{P}\mathbb{V} \stackrel{k_{1}}{\Longrightarrow}\ldots\stackrel{k_{m}}{\Longrightarrow} S{\,\|\,}\mathbb{V}.\]
By similar argument one shows that $S{\,\|\,}\mathbb{U}\approx S{\,\|\,}\mathbb{V}$ implies $i_1i_2\ldots i_l = j_1j_2\ldots j_r$.

The proof of (i) and (iii) can be done in the same fashion.
\qed\end{proof}

\section{Proof of Proposition~\ref{pro:check}}\label{A-check}

Suppose $\mathbb{U} = U_{i_1}U_{i_2} \dots U_{i_l}$ and $\mathbb{V} = V_{j_1} V_{j_2} \dots V_{j_r}$.
We have seen that $i_1i_2\ldots i_l = j_1j_2\ldots j_r$ and $u_{i_1}u_{i_2} \dots u_{i_l} = v_{j_1} v_{j_2} \dots v_{j_r}$ imply $C{\,\|\,}Z\mathbb{P}\mathbb{U} \simeq C{\,\|\,}Z\mathbb{P}\mathbb{V}$ for all $\mathbb{P}$.
The following lemma says that if there is some $\mathbb{P}$ such that $C{\,\|\,}Z\mathbb{P}\mathbb{U} \simeq C{\,\|\,}Z\mathbb{P}\mathbb{V}$, then $i_1i_2\ldots i_l = j_1j_2\ldots j_r$ and $u_{i_1}u_{i_2} \dots u_{i_l} = v_{j_1} v_{j_2} \dots v_{j_r}$.

\begin{lemma}\label{weak-check-variant}
If $C{\,\|\,}Z\mathbb{P}_v^1 \approx C{\,\|\,}Z\mathbb{P}_v^{2}\mathbb{P}_u^{2}$ for some $\mathbb{P}_v^1,\mathbb{P}_v^{2},\mathbb{P}_u^{2}$ with $\mathbb{P}_u^{2}\ne\epsilon$, then {\em INST} has a solution.
\end{lemma}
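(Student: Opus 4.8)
The plan is to read off from the hypothesis $C{\,\|\,}Z\mathbb{P}_v^1 \weq C{\,\|\,}Z\mathbb{P}_v^2\mathbb{P}_u^2$ two word equalities that together exhibit a solution of INST. Write $\mathbb{P}_v^1 = V_{a_1}\cdots V_{a_p}$, $\mathbb{P}_v^2 = V_{b_1}\cdots V_{b_q}$ and $\mathbb{P}_u^2 = U_{c_1}\cdots U_{c_s}$ with $s\ge 1$. I aim to establish the equality of index sequences
\[
a_1 a_2 \cdots a_p = b_1 b_2 \cdots b_q\, c_1 c_2 \cdots c_s
\]
together with the equality of announced strings
\[
v_{a_1} v_{a_2} \cdots v_{a_p} = v_{b_1} v_{b_2} \cdots v_{b_q}\, u_{c_1} u_{c_2} \cdots u_{c_s}.
\]
Granting both, the first yields $p = q+s$, $a_j = b_j$ for $j\le q$, and $a_{q+k} = c_k$ for $k\le s$; substituting these identities into the second and cancelling the common prefix $v_{b_1}\cdots v_{b_q}$ over the free monoid $\Sigma^{*}$ leaves $u_{c_1}\cdots u_{c_s} = v_{c_1}\cdots v_{c_s}$, so the sequence $c_1,\dots,c_s$ is a solution of INST because $s\ge 1$.

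To obtain the index equality I let Attacker play $C{\,\|\,}Z\mathbb{P}_v^1 \stackrel{\lambda_S}{\longrightarrow} S{\,\|\,}Z\mathbb{P}_v^1$. Since $\lambda_S$ can only be produced by $C\stackrel{\lambda_S}{\longrightarrow}S$ and no silent transition of $C{\,\|\,}Z\mathbb{P}_v^2\mathbb{P}_u^2$ creates $S$, every Defender answer has the shape $C{\,\|\,}Z\mathbb{P}_v^2\mathbb{P}_u^2 \Longrightarrow C{\,\|\,}J \stackrel{\lambda_S}{\longrightarrow} S{\,\|\,}J \Longrightarrow S{\,\|\,}R$, so $S{\,\|\,}Z\mathbb{P}_v^1 \weq S{\,\|\,}R$ where $R$ is reached by silent transitions from $Z\mathbb{P}_v^2\mathbb{P}_u^2$ run in parallel with the junk $J$ that $C$ spawns via $C\stackrel{\tau}{\longrightarrow}C{\,\|\,}G$ and $C\stackrel{\tau}{\longrightarrow}C{\,\|\,}G_v$. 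Because $S$ simulates every action of $\Sigma$, only $\mathcal{N}$ actions constrain the game, and — once the junk has been dispatched, see below — the analysis is the one carried out in Appendix~\ref{A-index-string} for Lemma~\ref{pro:index_string}: Attacker drives the play through the longest $\lambda_Z, k_1,\dots,k_m$ block of $\mathcal{N}$ actions, which on the side $S{\,\|\,}Z\mathbb{P}_v^1$ can only enumerate $a_1\cdots a_p$ and on the other side can only enumerate $b_1\cdots b_q c_1\cdots c_s$, using $V_i \stackrel{\tau}{\longrightarrow} W(v_i,i)\stackrel{i}{\longrightarrow}W(v_i,0)$ and $U_i\stackrel{\tau}{\longrightarrow}W(u_i,i)\stackrel{i}{\longrightarrow}W(u_i,0)$; this forces the two index sequences to coincide. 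The string equality is obtained symmetrically: Attacker's move $C{\,\|\,}Z\mathbb{P}_v^1 \stackrel{\lambda_I}{\longrightarrow} I{\,\|\,}Z\mathbb{P}_v^1$ leads to $I{\,\|\,}Z\mathbb{P}_v^1 \weq I{\,\|\,}R'$, in which $I$ simulates every action of $\mathcal{N}$ so only $\Sigma$ actions count, and the companion argument shows that the strings $v_{a_1}\cdots v_{a_p}$ and $v_{b_1}\cdots v_{b_q}u_{c_1}\cdots u_{c_s}$ released by $V_{a_j}$ on one side and by $V_{b_j},U_{c_k}$ on the other must agree.

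The step I expect to be the real obstacle is dispatching the junk $J$, and it is exactly why a bare appeal to Lemma~\ref{pro:index_string} does not suffice: a spawned $G$ or $G_v$ can silently breed arbitrarily many $V_i$'s and so announce indices, and letters of arbitrary $v_i$'s, that the clean side $S{\,\|\,}Z\mathbb{P}_v^1$ (respectively $I{\,\|\,}Z\mathbb{P}_v^1$) cannot reproduce, let alone in the forced left-to-right order. The point to be made precise is that this extra power damages Defender rather than Defender's opponent: whenever the right hand side is in a position to emit — possibly after a preparatory run of $\tau$ moves during which Defender can only stand still, keeping the pair weakly bisimilar — an action the clean side cannot match in the mandated order or cannot match at all, Attacker performs it and Defender gets stuck; and leaving a $G$ parked is no safer, since Attacker may wake it up later. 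Hence any viable Defender answer is, up to $\weq$, the junk-free one $S{\,\|\,}Z\mathbb{P}_v^2\mathbb{P}_u^2$ (respectively $I{\,\|\,}Z\mathbb{P}_v^2\mathbb{P}_u^2$). Turning this into a rigorous argument amounts to a mild strengthening of Lemma~\ref{pro:index_string} that permits generator junk on one side and mixed $U/V$ words on both; I would prove it by the same case analysis as in Appendix~\ref{A-index-string}, invoking Lemma~\ref{computation-lemma} and Corollary~\ref{lm:forcing} to tame the silent transitions.
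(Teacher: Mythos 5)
Your proposal is correct and follows essentially the same route as the paper: force the $\lambda_I$/$\lambda_S$ move from the clean side, argue that the generator junk spawned by $C$ in Defender's response must be negligible so that $I{\,\|\,}Z\mathbb{P}_v^1\approx I{\,\|\,}Z\mathbb{P}_v^{2}\mathbb{P}_u^{2}$ and $S{\,\|\,}Z\mathbb{P}_v^1\approx S{\,\|\,}Z\mathbb{P}_v^{2}\mathbb{P}_u^{2}$, and then extract a solution from the resulting index and string equalities via a mixed-word reading of Lemma~\ref{pro:index_string}. The junk-dispatching step you flag as the main obstacle is exactly the point the paper settles with ``it is easy to see that $Q'$ contains neither $G$ nor $G_v$'' (leaving only components $W(\epsilon,i)$ absorbed by $I$), and your closing combinatorial derivation of $u_{c_1}\cdots u_{c_s}=v_{c_1}\cdots v_{c_s}$ simply makes explicit what the paper leaves implicit.
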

\begin{proof}
Suppose $C{\,\|\,}Z\mathbb{P}_v'\stackrel{\lambda_{I}}{\longrightarrow}I{\,\|\,}Z\mathbb{P}_v'$ is simulated by
\[C{\,\|\,}Z\mathbb{P}_v^{2}\mathbb{P}_u^{2} \Longrightarrow C{\,\|\,}Q{\,\|\,}Z\mathbb{P}_v^{2}\mathbb{P}_u^{2} \stackrel{\lambda_{I}}{\longrightarrow} I{\,\|\,}Q{\,\|\,}Z\mathbb{P}_v^{2}\mathbb{P}_u^{2} \Longrightarrow I{\,\|\,}Q'{\,\|\,}Z\mathbb{P}_v^{2}\mathbb{P}_u^{2}\approx I{\,\|\,}Z\mathbb{P}_v^1\]
for some $Q,Q'$.
It is easy to see that $Q'$ contains neither $G$ nor $G_v$.
Moreover $Q'$ contains no processes of the form $W(\omega,i)$ for $\omega\ne\epsilon$.
The only nontrivial components $Q'$ may contain are processes of the form $W(\epsilon,i)$.
It follows that $I{\,\|\,}Q'\approx I$.
Consequently $I{\,\|\,}Z\mathbb{P}_v^1 \approx I{\,\|\,}Z\mathbb{P}_v^{2}\mathbb{P}_u^{2}$.
Using similar argument one derives that $S{\,\|\,}Z\mathbb{P}_v^1 \approx S{\,\|\,}Z\mathbb{P}_v^{2}\mathbb{P}_u^{2}$.
We are done by applying Lemma~\ref{pro:index_string}.
\qed\end{proof}

\section{Proof of Lemma~\ref{lm:defender_ws}}\label{A-defender-ws}

\begin{figure*}[t]
\begin{center}
\resizebox{\textwidth}{!}{
\tikzstyle{rec} = [draw, thin, text=red]
    \begin{tikzpicture}
      \node(aa) at (2.5,-0.7) {(i)};
      \node(aa) at (8.75,-0.7) {(ii)};
         \node(a1) at (0,0) {$X$};
        \node (b1) at (0,-1.5) {$Y$};
        \node(c1) at (0.8,-1.5) {$D$};
        \node(c2) at (2.2,-1.5) {$D{\,\|\,}G_u\mathbb{U}^{-}$};
        \node(a2) at (4.2,0) {$D{\,\|\,}G_v$};
        \node(b2) at (4.2,-1.5) {$D{\,\|\,}G_v\mathbb{U}$};
        \path[ ->] (a1) edge[above] node{$\lambda_U$} (a2);
        \path[->] (c2) edge[above] node{$\lambda_U$} (b2);
        \path[->] (a1) edge (c1);
        \path[->] (b1) edge (c1);
        \path[<-]  (0.95,-1.4) edge (1.40,-1.4);
        \path[->]  (0.95,-1.60) edge (1.40,-1.60);
        \node (bs1) at (1.42,-1.50) {$*$};
        \node (as1) at (0.95,-1.3) {$*$};
        \node(a3) at (7.4,0) {$D{\,\|\,}Q{\,\|\,}G_v\mathbb{P}_v$};
        \node(b3) at (7.4,-1.5) {$D{\,\|\,}Q{\,\|\,}G_v\mathbb{P}_v\mathbb{U}$};
        \node(a4) at (10,0) {$C{\,\|\,}Q{\,\|\,}G_v\mathbb{P}_v$};
        \node(b4) at (10,-1.5) {$C{\,\|\,}Q{\,\|\,}G_v\mathbb{P}_v\mathbb{U}$};

        \path[->] (a3) edge[above] node{$\lambda_D$} (a4);
        \path[->] (b3) edge[above] node{$\lambda_D$} (b4);
        \node(a23) at (5.5,0) {$\dots$};
        \node(b23) at (5.5,-1.5) {$\dots$};
        \path[->] (a2) edge[above] node{$\ell$} (a23);
        \path[->] (b2) edge[above] node{$\ell$} (b23);
        \path[->] (a23) edge[above] node{$\ell'$} (a3);
        \path[->] (b23) edge[above] node{$\ell'$} (b3);

        \path[dotted] (a1) edge (c2);
        \path[dotted] (a1) edge (b1);
        \path[dotted] (a2) edge (b2);
        \path[dotted] (a3) edge (b3);
        \path[dotted] (a4) edge (b4);

        \node(aa) at (8.2,-3.5) {(iii)};

        \node(3a1) at (0,-2.8) {$C{\,\|\,}Q{\,\|\,}G_v\mathbb{P}_v^{1}$};
        \node (3b1) at (0,-4.3) {$C{\,\|\,}Q{\,\|\,}G_v\mathbb{P}_v^{2}\mathbb{U}$};

        \node(3a12) at (2,-2.8) {$\dots$};
        \node(3b12) at (2,-4.3) {$\dots$};
        \path[->] (3a1) edge[above] node{$\ell$} (3a12);
        \path[->] (3b1) edge[above] node{$\ell$} (3b12);
        \node(3a2) at (4.2,-2.8) {$C{\,\|\,}Q'{\,\|\,}G_v\mathbb{P}_v^{3}$};
        \node(3b2) at (4.2,-4.3) {$C{\,\|\,}Q'{\,\|\,}G_v\mathbb{P}_v^{4}\mathbb{U}$};
        \path[->] (3a12) edge[above] node{$\ell'$} (3a2);
        \path[->] (3b12) edge[above] node{$\ell'$} (3b2);

        \node (3a3) at (10,-2.8) {$C{\,\|\,}Q'{\,\|\,}Z\mathbb{P}_v^{4}\mathbb{V}$};
        \node (3b3) at (10,-4.3) {$C{\,\|\,}Q'{\,\|\,}Z\mathbb{P}_v^{4}\mathbb{U}$};
        \node (3a23) at (7,-2.8)  {$C{\,\|\,}Q'{\,\|\,}G_v\mathbb{P}_v^{4}\mathbb{V}$};
        \path[->] (3b2) edge[above] node{$\lambda_V$}(3b3);
        \path[->] (3a23) edge[above] node{$\lambda_V$}(3a3);

        \path[<-] (5.15,-2.7) edge (5.85,-2.7);
        \path[->] (5.15,-2.9) edge (5.85,-2.9);
        \node(3as1) at (5.9,-2.8) {$*$};
        \node(3bs1) at (5.15,-2.6)  {$*$};
        \path[dotted] (3a3) edge (3b3);
        \path[dotted] (3a23) edge (3b2);
        \path[dotted] (3a1) edge (3b1);
        \path[dotted] (3a2) edge (3b2);
    \end{tikzpicture}}
\end{center}
\caption{Defender's Strategy}
\label{fig:defender_ws}
\end{figure*}

Defender's strategy is composed of three substrategies (see Fig.~\ref{fig:defender_ws}).
We now give the details of the substrategies.
\begin{enumerate}[(i)]
\item
By Defender's Forcing with delayed justification, Attacker chooses to play $X\act{\lambda_U}D{\,\|\,}G_v$.
Defender responds with the following transition sequence
\[ Y \stackrel{\tau}{\longrightarrow}D \Longrightarrow^{*}  D{\,\|\,}G_u\mathbb{U}^{-} \act{\lambda_U} D{\,\|\,}G_v\mathbb{U},\]
noticing that $Y \beq D \beq D{\,\|\,}G_u\mathbb{U}^{-}$ according to (1) of Corollary~\ref{lm:forcing}.
  The following case analysis implies that if Attacker plays optimal, it would continue from the configuration $(D{\,\|\,}G_v, D{\,\|\,}G_v\mathbb{U})$.
  \begin{enumerate}[(a)]
  \item If Attacker sets the configuration to be $(D{\,\|\,}G_v, D{\,\|\,}G_v\mathbb{U})$, we are done.
  \item Otherwise assume w.l.o.g. that Attacker sets it to be $(X, D{\,\|\,}G_v\mathbb{U}^{-})$.
  \begin{itemize}
  \item
  By Defender's Forcing, Attacker would not play $D{\,\|\,}G_v\mathbb{U}^{-} \act{\ell} Q$ since it can be matched by $X\stackrel{\tau}{\longrightarrow} D \Longrightarrow D{\,\|\,}G_v\mathbb{U}^{-} \act{\ell} Q$.
  \item
  Attacker would not play $X\stackrel{\tau}{\longrightarrow} D$ either since Defender can win by palying $D{\,\|\,}G_u\mathbb{U}^{-}\Longrightarrow D$.
  \item
  If Attacker plays $X\act{\lambda_U}D{\,\|\,}G_v$, Defender responds with the transition $D{\,\|\,}G_u\mathbb{U}^{-} \act{\lambda_U} D{\,\|\,}G_v\mathbb{U}$.
  \end{itemize}
  \end{enumerate}

\item
Now suppose the current configuration is $(D{\,\|\,}Q{\,\|\,}G_v\mathbb{P}_v, D{\,\|\,}Q{\,\|\,}G_v\mathbb{P}_v\mathbb{U})$.
Attacker would choose neither $G_v \stackrel{\tau}{\longrightarrow} \epsilon$ nor $G_v \act{\lambda_V} Z$ since $D{\,\|\,}Q{\,\|\,}\mathbb{P}_v \beq D{\,\|\,}Q{\,\|\,}\mathbb{P}_v\mathbb{U}$ and $D{\,\|\,}Q{\,\|\,}Z\mathbb{P}_v \beq D{\,\|\,}Q{\,\|\,}Z\mathbb{P}_v\mathbb{U}$ by (1) of Corollary~\ref{lm:forcing}.
The other cases are as follows:
\begin{itemize}
\item
Attacker plays $D{\,\|\,}Q{\,\|\,}G_v\mathbb{P}_v\act{\ell}D{\,\|\,}Q'{\,\|\,}G_v\mathbb{P}_v$.
Defender responds by playing $D{\,\|\,}Q{\,\|\,}G_v\mathbb{P}_v\mathbb{U}\act{\ell}D{\,\|\,}Q'{\,\|\,}G_v\mathbb{P}_v\mathbb{U}$.
\item
Attacker plays $D{\,\|\,}Q{\,\|\,}G_v\mathbb{P}_v\act{\tau}D{\,\|\,}G_v{\,\|\,}Q{\,\|\,}G_v\mathbb{P}_v$.
Defender responds by playing $D{\,\|\,}Q{\,\|\,}G_v\mathbb{P}_v\mathbb{U}\act{\tau}D{\,\|\,}G_v{\,\|\,}Q{\,\|\,}G_v\mathbb{P}_v\mathbb{U}$.
\item
Attacker plays $D{\,\|\,}Q{\,\|\,}G_v \mathbb{P}_v\act{\tau}D{\,\|\,}Q{\,\|\,}G_v V_i\mathbb{P}_v$.
Defender responds by playing $D{\,\|\,}Q{\,\|\,}G_v \mathbb{P}_v\mathbb{U}\act{\tau}D{\,\|\,}Q{\,\|\,}G_v V_i\mathbb{P}_v\mathbb{U}$.
\item
Attacker plays $D{\,\|\,}Q{\,\|\,}G_v\mathbb{P}_v\act{\lambda_D}C{\,\|\,}Q{\,\|\,}G_v\mathbb{P}_v$.
Defender counter plays $D{\,\|\,}Q{\,\|\,}G_v\mathbb{P}_v\mathbb{U} \act{\lambda_D}C{\,\|\,}Q{\,\|\,}G_v\mathbb{P}_v\mathbb{U}$.
In this case Attacker can only choose $(C{\,\|\,}Q{\,\|\,}G_v\mathbb{P}_v ,C{\,\|\,}Q{\,\|\,}G_v\mathbb{P}_v\mathbb{U})$ as the next configuration.
\end{itemize}
In these cases Attacker will eventually choose to play an $\lambda_D$ action to have any chance to win at all.

If Attacker chooses $D{\,\|\,}Q{\,\|\,}G_v\mathbb{P}_v\mathbb{U}$ to play, the situations are symmetric.

\item
For generality suppose $(C{\,\|\,}Q{\,\|\,}G_v\mathbb{P}_v^{1}, C{\,\|\,}Q{\,\|\,}G_v\mathbb{P}_v^{2}\mathbb{U})$ is the current configuration.
Attacker would not choose any transition of the form
\[C{\,\|\,}Q{\,\|\,}G_v\mathbb{P}_v^{1} \act{\ell} P_1\]
since the following response
\begin{equation}\label{2014-01-27}
C{\,\|\,}Q{\,\|\,}G_v\mathbb{P}_v^{2}\mathbb{U}\Longrightarrow  C{\,\|\,}Q \Longrightarrow  C{\,\|\,}Q{\,\|\,}G_v\mathbb{P}_v^{1} \act{\ell} P_1
\end{equation}
is a winning move for Defender.
To see that none of the silent transitions appearing in~(\ref{2014-01-27}) are change-of-state, first notice that $C{\,\|\,}Q \simeq C{\,\|\,}Q{\,\|\,}G_v\mathbb{P}_v^{1}$ by (2) of Corollary~\ref{lm:forcing}.
The equivalence $C{\,\|\,}Q{\,\|\,}G_v\mathbb{P}_v^{2}\mathbb{U} \simeq C{\,\|\,}Q$ is derived as follows:
One has that
\[C{\,\|\,}Q{\,\|\,}G_v\mathbb{P}_v^{2}\mathbb{U}\Longrightarrow  C{\,\|\,}Q \Longrightarrow C{\,\|\,}Q{\,\|\,}G_v\mathbb{P}_v^{2}\mathbb{V}.\]
It is easy to see that Proposition~\ref{pro:check} implies $C{\,\|\,}G_v\mathbb{P}_v^{2}\mathbb{U} \simeq C{\,\|\,}G_v\mathbb{P}_v^{2}\mathbb{V}$.
It then follows from $C{\,\|\,}Q{\,\|\,}G_v\mathbb{P}_v^{2}\mathbb{U} \simeq C{\,\|\,}Q{\,\|\,}G_v\mathbb{P}_v^{2}\mathbb{V}$ and Lemma~\ref{computation-lemma} that $C{\,\|\,}Q{\,\|\,}G_v\mathbb{P}_v^{2}\mathbb{U} \simeq C{\,\|\,}Q$.

Now suppose Attacker chooses $C{\,\|\,}Q{\,\|\,}G_v\mathbb{P}_v^{2}\mathbb{U}$ to play.
\begin{enumerate}[(a)]
\item
If Attacker plays some $C{\,\|\,}Q{\,\|\,}G_v\mathbb{P}_v^{2}\mathbb{U} \act{\ell}P_{2}$ caused by either an action of $Q$ or $C\stackrel{\tau}{\longrightarrow}C{\,\|\,}G$ or $C\stackrel{\tau}{\longrightarrow}C{\,\|\,}G_{v}$, Defender plays the same action, reaching to a configuration of the same shape.
\item
If Attacker plays $C{\,\|\,}Q{\,\|\,}G_v\mathbb{P}_v^{2}\mathbb{U}\act{\lambda_I} I{\,\|\,}Q{\,\|\,}G_v\mathbb{P}_v^{2}\mathbb{U}$, Defender replies
\[C{\,\|\,}Q{\,\|\,}G_v\mathbb{P}_v^{1} \Longrightarrow  C{\,\|\,}Q \Longrightarrow C{\,\|\,}Q{\,\|\,}G_v\mathbb{P}_v^{2}\mathbb{V} \act{\lambda_I} I{\,\|\,}Q{\,\|\,}G_v\mathbb{P}_v^{2}\mathbb{V}.\]
Suppose Attacker chooses $(I{\,\|\,}Q{\,\|\,}G_v\mathbb{P}_v^{2}\mathbb{U},I{\,\|\,}Q{\,\|\,}G_v\mathbb{P}_v^{2}\mathbb{V})$ to be the next configuration.
Now $C{\,\|\,}Q{\,\|\,}G_v\mathbb{P}_v^{1} \simeq C{\,\|\,}Q{\,\|\,}G_v\mathbb{P}_v^{2}\mathbb{V}$ by (2) of Corollary~\ref{lm:forcing} and $I{\,\|\,}Z\mathbb{P}_v^{2}\mathbb{V} \simeq I{\,\|\,}Z\mathbb{P}_v^{2}\mathbb{U}$ by Lemma~\ref{pro:index_string}.
It follows that \[I{\,\|\,}Q{\,\|\,}G_v\mathbb{P}_v^{2}\mathbb{V} \simeq I{\,\|\,}Q{\,\|\,}G_v\mathbb{P}_v^{2}\mathbb{U}.\]
So in this case Defender wins.
\item
The situation is similar if Attacker chooses to play an $\lambda_S$ action.
\end{enumerate}
Attacker will not win if it keeps doing (a).
Eventually it must do (b) or (c).
\end{enumerate}
By applying the three substrategies consecutively we see that Attacker's optimal strategy is to reach a configuration of the form $(C{\,\|\,}Q{\,\|\,}G_v\mathbb{P}_v\mathbb{U},C{\,\|\,}Q{\,\|\,}G_v\mathbb{P}_v\mathbb{V})$.
This is however also a win situation for Defender because of the equivalence
\[C{\,\|\,}G_v\mathbb{P}_v\mathbb{U} \simeq C{\,\|\,}G_v\mathbb{P}_v\mathbb{V},\]
the simple proof of which is as follows:
\begin{itemize}
\item If $C{\,\|\,}G_v\mathbb{P}_v\mathbb{U}$ performs a $\lambda_{I}$ or $\lambda_{S}$ action, then $C{\,\|\,}G_v\mathbb{P}_v\mathbb{V}$ does the same.
We are done by Lemma~\ref{pro:index_string}.
\item If $C{\,\|\,}G_v\mathbb{P}_v\mathbb{U}$ does an action induced by $G_v\stackrel{\tau}{\longrightarrow}\epsilon$, the process $C{\,\|\,}G_v\mathbb{P}_v\mathbb{V}$ follows suit.
We are done by Corollary~\ref{lm:forcing}.
\item If $C{\,\|\,}G_v\mathbb{P}_v\mathbb{U}$ acts using $G_v\stackrel{\lambda_{V}}{\longrightarrow}Z$, then $C{\,\|\,}G_v\mathbb{P}_v\mathbb{V}$ copycats the action.
We are done by Proposition~\ref{pro:check}.
\item If $C{\,\|\,}G_v\mathbb{P}_v\mathbb{U}\stackrel{\tau}{\longrightarrow}C{\,\|\,}G_v V_i\mathbb{P}_v\mathbb{U}$, then $C{\,\|\,}G_v\mathbb{P}_v\mathbb{V}\stackrel{\tau}{\longrightarrow}C{\,\|\,}G_v V_i\mathbb{P}_v\mathbb{V}$.
    We get a pair of processes of the same shape.
\end{itemize}
This completes the proof.

\section{Proof of Lemma~\ref{lm:attacker_ws}}\label{A-attacker-ws}

\begin{figure*}[t]
\begin{center}
\resizebox{\textwidth}{!}{
\tikzstyle{rec} = [draw, thin, text=red]
    \begin{tikzpicture}
        \node(a1) at (0,0) {$X$};
        \node (b1) at (0,-2) {$Y$};
        \path[dotted] (a1) edge (b1);

        \node(a2) at (3,0) {$D\|G_v$};
        \node(b2) at (3,-2) {$D\|Q_1\|G_v\mathbb{P}^{1}_v\mathbb{P}_u^{2}$};
        \path[red, ->] (a1) edge[above] node{$\lambda_U$} (a2);
        \node(bs1) at (0.7,-1.95) {$*$};
        \node(bs2) at (1.75,-1.95){$*$};
        \path[->] (b1) edge (0.65,-2);
        \path[->] (0.75,-2) edge[above] node{$\lambda_U$} (1.2,-2);
        \path[->] (1.25,-2) edge (b2);
        \path[dotted] (a2) edge (b2);

        \node(a3) at (6.35,0) {$C\|Q_3$};
        \node(b3) at (6.35,-2) {$C\|Q_1\|G_v\mathbb{P}^{1}_v\mathbb{P}_u^{2}$};
        \path[dotted] (a3) edge (b3);
        \path[red, ->] (b2) edge[above] node{$\lambda_D$} (b3);
        \path[->] (a2) edge (4.2,0);
        \path[->] (4.3,0) edge[above] node{$\lambda_D$} (5.1,0);
        \path[->] (5.15,0) edge (a3);
        \node(as1) at (4.25,0.05) {$*$};
        \node(as2) at (5.85,0.05){$*$};

        \node(b4) at (9.3,-2) {};
        \node(as3) at (8.25,0.05) {$*$};
        \node[red](bs3) at (8.05,-1.95) {$*$};

        \node(a5) at (10.5,0) {$C\|Q_5\|Z\mathbb{P}^2_v$};
        \node(b5) at (10.5,-2) {$C\|Z\mathbb{P}^{1}_v\mathbb{P}_u^{2}$};
        \path[dotted] (a5) edge (b5);
         \path[red,->] (8.1,-2) edge [above] node{$\lambda_V$}(b5);
        \path[red,->] (b3) edge (8,-2);
        \path[->] (a3) edge (8.2,0);
        \path[->] (8.3,0) edge[above] node{$\lambda_V$} (8.85,0);
        \path[->] (8.9,0) edge (a5);
        \node(as4) at (9.47,0.05) {$*$};
        \node(i) at (1.5,-1) {(\romannumeral1)};
        \node(ii) at (4.7,-1) {(\romannumeral2)};
        \node(iii) at (8.5,-1) {(\romannumeral3)};
    \end{tikzpicture}}
\end{center}
\caption{Attacker's Strategy}
\label{fig:weak_attacker_ws}
\end{figure*}

Attacker's winning strategy is very much similar to the optimal strategy described in Section~\ref{A-defender-ws}.
It is outlined in Fig.~\ref{fig:weak_attacker_ws}, where Attacker's moves are marked in red.
We explain the strategy in the following.
\begin{itemize}
\item
Attacker plays $X\act{\lambda_U}D{\,\|\,}G_v$.
Defender's optimal response would be
\[Y \Longrightarrow  D{\,\|\,}G_u\mathbb{P}_u^{1} \act{\lambda_U} D{\,\|\,}G_v\mathbb{P}_u^{2} \Longrightarrow D {\,\|\,} Q_1{\,\|\,}G_v\mathbb{P}_v^{1}\mathbb{P}_u^{2} \]
 for some $Q_1$, $\mathbb{P}_v^{1}$, $\mathbb{P}_u^{1}$ and $\mathbb{P}_u^{2}$ such that $D\Longrightarrow  D{\,\|\,}Q_1$ and $\mathbb{P}_u^{2}=U_i\mathbb{P}_u^{1}$ for some  $i\in \mathcal{N}$.
Defender would not remove $G_v$ using the rule $G_v\stackrel{\tau}{\longrightarrow}\epsilon$ because that would make it unable to reply Attacker's next move $D{\,\|\,}G_v\stackrel{\lambda_{V}}{\longrightarrow}D{\,\|\,}Z$.
\item
Attacker then plays $D {\,\|\,} Q_1{\,\|\,}G_v\mathbb{P}_v^{1}\mathbb{P}_u^{2} \act{\lambda_D} C {\,\|\,} Q_1{\,\|\,}G_v\mathbb{P}_v^{1}\mathbb{P}_u^{2}$.
Defender's response must be of the form $D{\,\|\,}G_v \Longrightarrow  D{\,\|\,} Q_2 \act{\lambda_D} C{\,\|\,} Q_2 \Longrightarrow  C{\,\|\,}Q_3$ for some $Q_2,Q_3$.
\item
It is easy to see that $Q_1\Longrightarrow\epsilon$.
So the following is a valid move of Attacker:
\[C{\,\|\,}Q_1{\,\|\,}G_v\mathbb{P}_v^{1}\mathbb{P}_u^{2} \Longrightarrow C{\,\|\,}G_v\mathbb{P}_v^{1}\mathbb{P}_u^{2} \act{\lambda_V} C{\,\|\,}Z\mathbb{P}_v^{1}\mathbb{P}_u^{2}.\]
Defender's response must be a transition sequence of the following form
\[C{\,\|\,}Q_3 \Longrightarrow  C{\,\|\,}Q_4{\,\|\,}G_v\mathbb{P}_v^{2} \act{\lambda_V}C{\,\|\,}Q_4{\,\|\,}Z\mathbb{P}_v^{2} \Longrightarrow  C{\,\|\,}Q_5{\,\|\,}Z\mathbb{P}_v^{2}\]
for some $\mathbb{P}_v^{2}$, $Q_4$ and $Q_5$.
Now $Q_5$ must be a parallel composition of processes that can be generated by $C$ or $D$.
W.l.o.g. assume that \[Q_5=Q^{C}{\,\|\,}Q^{D}_1{\,\|\,}\dots{\,\|\,}Q^{D}_k\] where $Q^{C}$ is generated by $C$ and $D^{D}_i$ is generated by $D$ for $i\in\{1,\ldots,k\}$.
There are following cases:
\begin{enumerate}[(a)]
\item
If $Q_{i}$ contains an occurrence of $G_u$, then Attacker wins because the process $C{\,\|\,}Q_5{\,\|\,}Z\mathbb{P}_v^{2}$ can do a $\lambda_{U}$ action that cannot be simulated by the process $C{\,\|\,}Z\mathbb{P}_v^{1}\mathbb{P}_u^{2}$.
\item
If $Q_{i}$ contains an occurrence of $Z$, then Attacker also wins since the process $C{\,\|\,}Q_5{\,\|\,}Z\mathbb{P}_v^{2}$ can do two consecutive $\lambda_{Z}$ actions whereas the process $C{\,\|\,}Z\mathbb{P}_v^{1}\mathbb{P}_u^{2}$ can do only one such action.
\item
If for each $i\in\{1,\ldots,k\}$ the process $Q_{i}$ contains neither $G_u$ nor $Z$ then $C\beq C{\,\|\,}Q^{D}_i$ for all $i$.
By Lemma~\ref{cor:gen} we must also have $C{\,\|\,}Q^{C}\simeq C$.
It follows that $C{\,\|\,}Q_{5}\simeq C$.
So in this case the configuration the game reached is essentially $(C{\,\|\,}Z\mathbb{P}_v^{2},C{\,\|\,}Z\mathbb{P}_v^{1}\mathbb{P}_u^{2})$.
By Lemma~\ref{weak-check-variant}, $C{\,\|\,}Z\mathbb{P}_v^{2} \not\approx C{\,\|\,}Z\mathbb{P}_v^{1}\mathbb{P}_u^{2}$, meaning that Attacker has a winning strategy.
\end{enumerate}
\end{itemize}
We are done.

\end{document}